\definecolor{mg}{rgb}  {0.85, 0.,  0.85}
\newcommand{\refcheckize}[1]{%
  \expandafter\let\csname @@\string#1\endcsname#1%
  \expandafter\DeclareRobustCommand\csname relax\string#1\endcsname[1]{%
    \csname @@\string#1\endcsname{##1}\@for\@temp:=##1\do{\wrtusdrf{\@temp}\wrtusdrf{{\@temp}}}}%
  \expandafter\let\expandafter#1\csname relax\string#1\endcsname
}
\newcommand{\refcheckizetwo}[1]{%
  \expandafter\let\csname @@\string#1\endcsname#1%
  \expandafter\DeclareRobustCommand\csname relax\string#1\endcsname[2]{%
    \csname @@\string#1\endcsname{##1}{##2}\wrtusdrf{##1}\wrtusdrf{{##1}}\wrtusdrf{##2}\wrtusdrf{{##2}}}%
  \expandafter\let\expandafter#1\csname relax\string#1\endcsname
}
\newcommand\ie{{\em i.e.}~}
\def\N{\mathbb{N}}
\def\T{\mathbb{T}}
\def\Z{\mathbb{Z}}
\def\C{\mathbb{C}}
\def\F{\mathscr F}
\def\F{\mathcal F}
\def\R{\mathbb{R}}
\def\S{\mathcal S}
\def\Im{\rm Im}
\def\d{\mathrm d}
\def\({\left(}
\def\[{\left[}
\def\){\right)}
\def\]{\right]}
\def\l|{\left\lvert}
\def\r|{\right\rvert}
\def\<{\langle}
\def\>{\rangle}
\DeclareMathOperator{\Ker}{Ker}
\DeclareMathOperator{\sgn}{sgn}
\newtheorem{Theorem}{Theorem}[section]
\newtheorem{Remark}[Theorem]{Remark}
\newtheorem{Lemma}[Theorem]{Lemma}
\newtheorem{Corollary}[Theorem]{Corollary}
\newtheorem{Proposition}[Theorem]{Proposition}
\newtheorem{Definition}[Theorem]{Definition}
\newtheorem{Example}[Theorem]{Example}
\DeclareMathOperator{\Ran}{Ran}
\crefname{Lemma}{Lemma}{Lemmas}
\newcommand{\bel}{\begin{equation} \label}
\newcommand{\ee}{\end{equation}}
\newcommand{\ba}{\begin{array}}
\newcommand{\ea}{\end{array}}
\begin{document}
%----------------------------------------------------------------------------------------------------------------------------------------------------------
%----------------------------------------------------------------------------------------------------------------------------------------------------------
% Title
%----------------------------------------------------------------------------------------------------------------------------------------------------------
%----------------------------------------------------------------------------------------------------------------------------------------------------------

\title{Continuum limit for a discrete Hodge-Dirac operator on square lattices.}

\author[P.\ Miranda]{Pablo Miranda}\address{Departamento de Matem\'atica y Ciencia de la Computaci\'on, Universidad de Santiago de Chile, Las Sophoras 173. Santiago, Chile.}\email{pablo.miranda.r@usach.cl}
\author[D.\ Parra]{Daniel Parra}
\address{Departamento de Matem\'atica y Ciencia de la Computaci\'on, Universidad de Santiago de Chile, Las Sophoras 173. Santiago, Chile.}
\email{daniel.parra.v@usach.cl}

\begin{abstract}

We study the continuum limit for Dirac-Hodge operators defined on the $n$ dimensional square lattice $h\mathbb{Z}^n$ as $h$ goes to $0$. This result extends to a first order discrete differential operator the known convergence of discrete Schrödinger operators to their continuous counterpart. To be able to define such a discrete analog, we start by defining an alternative framework for a higher--dimensional discrete differential calculus. We believe that this framework, that generalize the standard one defined on simplicial complexes, could be of independent interest. We then express our operator as a differential operator acting on discrete forms to finally be able to show the limit to the continuous Dirac-Hodge operator. 

\bigskip

\noindent{\bf Keywords:} Continuum limit, Discrete Dirac operator, Higher order cochains, Discrete differential calculus.

\bigskip

\noindent{\bf Mathematics Subject Classification:} 47A10, 81Q35, 47A58.
\end{abstract}
\maketitle \vspace{-1cm}

%----------------------------------------------------------------------------------------------------------------------------------------------------------
%----------------------------------------------------------------------------------------------------------------------------------------------------------
\section{Introduction and Main Result}\label{sec:intro}
%----------------------------------------------------------------------------------------------------------------------------------------------------------
%----------------------------------------------------------------------------------------------------------------------------------------------------------

The aim of this paper is to study the continuum limit for a first order \emph{discrete differential operator} on $h\Z^n$ as the mesh size $h$ goes to $0$. Such type of results have attracted growing attention following \cite{NT21} in which Nakamura and Tadano showed that discrete Schrödinger operators of the form $-\Delta+ V$ on $L^2(h\Z^d)$ converge to the corresponding Schrödinger operator $-\Delta+ V$ on $L^2(\R^d)$ in the \emph{norm resolvent sense}. Such kind of convergence, sometimes called generalized convergence to emphasize that operators are defined in different Hilbert spaces (see \cite{PZ22x} and references therein), is useful for studiying manifolds representing networks of thin tubes in the limiting process as they shrink to its underlying graph \cite{Po12}.

In the aftermath of the aforementioned result, some related techniques have been applied to the study of Fourier Multipliers \cite{CGJ21} and of Laplacians on the half-space \cite{CGJ22xb}, and for quantum graphs on the Euclidean space \cite{ENT22}. In contrast with these positive and natural results, the study of discrete counterparts of Dirac operators has yielded different results. In \cite{SU21x} the authors showed that a discrete Dirac operator on $\ell^2(h\Z^2)^2$, written in terms of first--order discrete forward or backward differences, converge only in the \emph{strong resolvent sense}. This was then confirmed in \cite{CGJ22x} where they showed that in order to obtain the norm resolvent convergence a diagonal term of order $2$ (\ie a Laplacian) is needed to be added (this is related to the  so called \emph{fermion doubling}. Our operator  does not present  that phenomenon. See \cref{lemmaseccion3}). %! You should comment here on your Lemma 3.1, and on your formula (26), which shows why you don't need to add a regularizing term like in the Dirac case of dimension at least 2. In your case, $H_h^2$ gives the discrete Laplacian in all dimensions, unlike in the Dirac case. 

Motivated by this asymmetry between the discrete and continuous setting, the strategy of this manuscript is different. Following the study of the Gauss-Bonnet operator for graphs initiated in \cite{CT15} and continued in \cite{ABDE21,Pa17,MPR23} we adopt the point view of considering on $\Z^n$ a discrete differential calculus. First, this gives rise to a \emph{discrete exterior derivative} $d_h$ acting on the Hilbert space of square--integrable cochains $\ell^2(X(h\Z^n))$ (see \cref{subsec:defzn} for a precise definition), and to the corresponding bounded self-adjoint Gauss-Bonnet operator $d_h+d_h^*$ (we refer to \cref{sec:CDS} for more detailed presentation). Afterwrds, we show that this is equivalent to considering the exterior derivative and the Hodge-Dirac operator acting on discrete differential forms in $\Z^n$ (see \cref{sec:hodgedirac}). By construction, this operator exhibits supersymmetry (as we discuss at the end of \cref{subsec:abstract}) and is of Dirac-type in the sense that $(d_h+d_h^*)^2$ corresponds to the discrete Hodge-Laplacian. 

One of the first obstacles to defining our Gauss-Bonnet operator is that it usually involves the use of the setting of \emph{simplicial complexes}. However, as one can easily attest visually, $\Z^n$ does not have simplices of dimension $2$ or greater (see \cite{Li20} for a general overview in that direction) so we need to start by presenting an alternative to this setting that allows us to encompass the structure of $\Z^n$ as well as to make sense of the varying of the mesh size $h$.

Let $\d$ be the standard boundary operator in $\R^n$ given by the exterior derivative (see \cref{limitoperator} for some properties of this operator). It follows that $\d+\d^*$ is unitarily equivalent to an operator acting in $\bigoplus_{j=0}^nL^2(\R^n;\C^{\binom{n}j})$. %! please refer to section 4.2 for the definition of $d$
The aim of this article is to construct an embedding $T_h$ of $\ell^2(X(h\Z^n))$ into $\bigoplus_{j=0}^nL^2(\R^n;\C^{\binom{n}j})$, and prove that $T_h(d_h+d_h^*)T_h^*$ converges in the norm resolvent sense to $\d+\d^*$. More specifically,  the main result of this article is the following.
\begin{Theorem}\label{teo:main}
Let $n\in \N$. Then, for every $h>0$ there exists an embedding $$T_h:\ell^2(X(h\Z^n))\to \bigoplus_{j=0}^nL^2\Big(\R^n;\C^{\binom{n}j}\Big)$$ such that for every $z\in\C$ satisfying $\Im (z)\neq 0$ we have
\begin{equation*}%\label{eq:main}
||T_h(d_h+d_h^*-z)^{-1}T_h^*-(\d+\d^*-z)^{-1}||=O(h) \text{ as }h\rightarrow 0\, .
\end{equation*}
\end{Theorem}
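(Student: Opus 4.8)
The plan is to diagonalise both operators by Fourier transform and to reduce the norm-resolvent estimate to a pointwise bound on matrix-valued symbols over the Brillouin zone $\T^n_h:=[-\pi/h,\pi/h]^n$. Writing $N=2^n=\sum_{j}\binom nj$, the discrete Fourier transform $\mathcal F_h\colon \ell^2(X(h\Z^n))\to L^2(\T^n_h;\C^N)$ (using the identification of cochains with discrete forms from \cref{sec:hodgedirac}) turns $d_h+d_h^*$ into multiplication by a Hermitian matrix symbol $D_h(\xi)$, whose entries are built from the forward-difference symbols $\lambda_j(\xi)=(\e^{ih\xi_j}-1)/h$; likewise the continuous Fourier transform $\mathcal F$ turns $\d+\d^*$ into multiplication by $D(\xi)$, whose entries use $i\xi_j$ (see \cref{limitoperator}). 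I would then define the embedding in Fourier variables by $\widehat{T_h u}=\mathds 1_{\T^n_h}\,\mathcal F_h u$, extended by zero to $\R^n$; with the standard $h$-dependent normalisation this is an isometry, so $T_h^*T_h=\Id$ while $T_hT_h^*$ is the band-limiting projection onto functions with Fourier support in $\T^n_h$.

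With this choice $T_h(d_h+d_h^*-z)^{-1}T_h^*$ is unitarily equivalent, via $\mathcal F$, to multiplication by $\mathds 1_{\T^n_h}(\xi)\,(D_h(\xi)-z)^{-1}$, whereas $(\d+\d^*-z)^{-1}$ corresponds to multiplication by $(D(\xi)-z)^{-1}$. Hence the operator norm in the theorem equals
\[
\sup_{\xi\in\R^n}\big\|\,\mathds 1_{\T^n_h}(\xi)(D_h(\xi)-z)^{-1}-(D(\xi)-z)^{-1}\big\|,
\]
and I would split the supremum into the region $\xi\in\T^n_h$ and its complement.

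The two estimates rest on the Dirac-type (Clifford) structure. Since $(\d+\d^*)^2$ is the flat Hodge Laplacian, $D(\xi)^2=|\xi|^2\Id$, so the eigenvalues of $D(\xi)$ are $\pm|\xi|$ and $\|(D(\xi)-z)^{-1}\|\le C_z\langle\xi\rangle^{-1}$; the analogous identity $(d_h+d_h^*)^2=$ discrete Hodge Laplacian gives $D_h(\xi)^2=L_h(\xi)\Id$ with $L_h(\xi)=\sum_j\tfrac{4}{h^2}\sin^2(h\xi_j/2)$, and the elementary bound $\sin t\ge\tfrac2\pi t$ on $[0,\pi/2]$ yields $L_h(\xi)\ge\tfrac4{\pi^2}|\xi|^2$ on $\T^n_h$, hence $\|(D_h(\xi)-z)^{-1}\|\le C_z\langle\xi\rangle^{-1}$ there as well. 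On the Brillouin zone I would use the resolvent identity
\[
(D_h-z)^{-1}-(D-z)^{-1}=(D_h-z)^{-1}(D-D_h)(D-z)^{-1},
\]
together with the symbol approximation $\|D(\xi)-D_h(\xi)\|\le Ch|\xi|^2$ coming from $|\e^{it}-1-it|\le t^2/2$ applied to each $\lambda_j$; combining this with the two resolvent bounds gives $C_z\langle\xi\rangle^{-1}\cdot Ch|\xi|^2\cdot C_z\langle\xi\rangle^{-1}=O(h)$ uniformly. Outside $\T^n_h$ one has $|\xi|\ge\pi/h$, so the single bound $\|(D(\xi)-z)^{-1}\|\le C_z\langle\xi\rangle^{-1}\le C_z'\,h$ already gives $O(h)$. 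Taking the supremum yields the claimed rate.

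The main obstacle is the interplay visible in the Brillouin-zone estimate: the symbol error $\|D-D_h\|$ grows like $h|\xi|^2$ and is therefore of size $O(1/h)$ near the boundary of $\T^n_h$, so it is only the simultaneous quadratic decay supplied by the two resolvents — which in turn relies on the Clifford relation $D_h(\xi)^2=L_h(\xi)\Id$ together with the uniform lower bound $L_h\gtrsim|\xi|^2$ — that converts this growth into the uniform $O(h)$ bound. Establishing that relation and that bound (equivalently, that the discrete Hodge Laplacian is the scalar componentwise discrete Laplacian, reflecting the absence of fermion doubling, \cref{lemmaseccion3}) is the technical heart; by contrast the verification that $T_h$ is isometric and the bookkeeping matching the $\binom nj$ form components across the discrete and continuous Fourier pictures are routine.
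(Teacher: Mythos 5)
Your argument is correct, but it takes a genuinely different route from the paper's. Where you diagonalise everything and use the \emph{sharp} band-limiting isometry (extend the discrete Fourier transform by zero outside $[-\pi/h,\pi/h]^n$), the paper follows Nakamura--Tadano \cite{NT21}: its embedding is $T_h=P_h^*\tilde U_h U$ with $P_h$ built from a Schwartz function $\varphi$ satisfying $\sum_{\mu}|\hat\varphi(\xi+\mu)|^2=1$ and $\supp\hat\varphi\subset(-1,1)^n$, and the estimate is split into \cref{le0} ($\|(1-P_h^*P_h)(H-z)^{-1}\|\leq Ch$) and \cref{le1} ($\|(H_h-z)^{-1}P_h-P_h(H-z)^{-1}\|\leq Ch$), combined via the identity $P_h^*(H_h-z)^{-1}P_h-(H-z)^{-1}=P_h^*\left[(H_h-z)^{-1}P_h-P_h(H-z)^{-1}\right]-(1-P_h^*P_h)(H-z)^{-1}$. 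Both arguments rest on the same two structural inputs: the matrix-symbol description of $H_h$ and $H$ (\cref{coeficientesHh}, with the continuous symbol obtained by replacing $a_{h,l}$ by $A_l$), and the absence of fermion doubling from \cref{lemmaseccion3} --- which enters the paper through \cref{le00} and the scalar denominators $r_z,R_z$, and enters your write-up through the Clifford relation $D_h(\xi)^2=L_h(\xi)\,\Id$ together with the lower bound $L_h(\xi)\geq \tfrac4{\pi^2}|\xi|^2$ on the Brillouin zone. Your sharp cutoff makes $T_h(d_h+d_h^*-z)^{-1}T_h^*$ an \emph{exact} matrix Fourier multiplier, so the whole theorem collapses to one pointwise symbol estimate (resolvent identity plus the Taylor bound $\|D(\xi)-D_h(\xi)\|\leq Ch|\xi|^2$, with the two-sided decay $\|(D_h(\xi)-z)^{-1}\|,\|(D(\xi)-z)^{-1}\|\leq C_z(1+|\xi|)^{-1}$ absorbing the quadratic growth, and the trivial bound $C_z h$ outside the zone); you thereby avoid the aliasing sums over $\mu\in\{0,\pm1\}^n$ in the paper's proofs of \cref{le0,le1}, which is genuinely more elementary for this translation-invariant operator. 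What the paper's smoother embedding buys is robustness: the $\varphi$-based framework survives the addition of potentials and other non-multiplier perturbations (the original motivation of \cite{NT21}), whereas your reduction to an essential supremum of matrix norms is special to constant-coefficient operators; note also that since \cref{teo:main} only asserts the existence of \emph{some} isometric embedding, your different choice of $T_h$ is legitimate. The one step you dismiss as routine but should state explicitly is the entrywise matching of $D_h(\xi)$ and $D(\xi)$ (identical signs and coefficients under $\lambda_j\leftrightarrow i\xi_j$, modulo the Fourier convention $e^{ih\xi_j}$ versus the paper's $e^{-2\pi i h\xi_l}$); this is precisely the content of \cref{coeficientesHh} and the remark following \cref{eq:resolventefibrado2}, and granting it, your uniform $O(h)$ bound follows as claimed.
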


Let us now briefly discuss the organization of this paper. In \cref{sec:CDS} we introduce an abstract definition of a discrete differential calculus that should encompass the simplicial complex setting as well as giving place to higher order cochains in $\Z^n$. We then go on to show that it gives rise to a Hodge Theory. We believe that this definition could be of independent interest to study higher order discrete differentials forms. Finally, we demonstrate how this setting can be used in the study of $\Z^n$ and briefly discuss why the case $n=1$ plays a particular role.

In \cref{sec:hodgedirac} we construct an intermediary space to show that the Gauss-Bonnet operator defined in \cref{sec:CDS} is unitarily equivalent to a Hodge-Dirac operator acting on \emph{discrete differential forms} in $\Z^n$. This equivalence allows us to exploit the machinery of differential forms which, together with the fact that we have global coordinates, much enhances our ability to study the corresponding operator. In particular, we show why it is natural to take $ \bigoplus_{j=0}^nL^2\Big(\R^n;\C^{\binom{n}j}\Big)$ as the space for the limiting operator. Finally, in \cref{sec:limit}  we introduce the parameter $h$ and the partial isometry $T_h$. Then, after some preparatory results, we show the proof of \cref{teo:main}.

\section{Higher order differential structure on \texorpdfstring{$\Z^n$}{Zn}}\label{sec:CDS}
The discrete Hodge-Laplacian, as a higher dimensional analog of the usual graph Laplacian, was first introduced by Eckmann \cite{Ec45}. It was defined over combinatorial simplicial complexes and allows one to obtain a discrete Hodge Theory. Furthermore, Dodziuk showed in \cite{Do76} that such discrete Hodge Theory approximates the Hodge Theory of a Riemannian manifold if one considers finer triangularizations of the given manifold. In that setting, the spectral theory of Hodge-Laplacian or Hodge-Dirac operators has already been considered, see \cite{GW16,HJ13,Ch20,Ch08,ABDE21} for some particular questions that are not far away of the perspective of this manuscript. However, all those examples are concerned with such discrete differential operators on simplicial complexes and such structure becomes trivial on $\Z^n$ for cochains of order $2$ or higher. With that in mind, in this section we  introduce slightly more general assumptions, compared with that of a simplicial complex, that still allows us to have our discrete differential operators and, in the $\Z^n$ case, to define non-trivial higher order cochains.

%---------------------------------------------------------------------------------------------------------------------------------------------------------
%-----------------------------------------------------------------------------------------------------------------------------------------------------------
\subsection{Abstract combinatorial differential structure}\label{subsec:abstract}
%----------------------------------------------------------------------------------------------------------------------------------------------------------
%----------------------------------------------------------------------------------------------------------------------------------------------------------

In order to motivate our definition below, we start by recalling the framework of discrete diferential operators on simplicial complexes.

An \emph{abstract simplicial complex} K  over a set $V$ is a collection of finite subsets of $V$ closed under inclusion. An element $F\in K$ is of dimension $i$ if it has cardinality $i+1$ and is called an $i$-simplex. An element in $F$ is called a vertice of $F$. By choosing an ordering of its vertices we say that we have an oriented simplex $[F]$. If there exist an even permutation transforming one ordering into another we say that they define the same orientation. If the permutation is odd, the orientations are considered to be oppposite and denoted by $\overline{[F]}$. 
The complex vector space $C^i(K)$ of $i$-cochains of $K$ is defined as the space of functions on $i$-simplices that satisfy
\begin{equation*}
f(\overline{[F]})=-f([F])\ .
\end{equation*}
Then the simplicial coboundary maps $d_i:C^i(K)\to C^{i+1}(K)$ are defined by
\begin{equation*}%\label{dsimplicial}
d_if([x_0,x_1,\dots,x_{i+1}])=\sum_{j=0}^{i+1}(-1)^jf([x_0,\dots,\hat{x}_j,\dots x_{i+1}])
\end{equation*}
where $\hat{x}_j$ indicates that we have omitted the $j$-th vertex.

Motivated by this formalism but aiming to cover the case of $\Z^n$ we propose the following definition where we denote by $\mathbb{P}(V)$ the set of all subsets of a set $V$.
\begin{Definition}\label{CDC}
We call $X$ a \emph{combinatorial differential complex} of dimension $n$ over $V$ if there exist:
\begin{enumerate}
\item $\imath : X \to \mathbb{P}(V)$ such that for every $A\in\mathbb{P}(V)$ we have that either $\sharp \imath ^{-1}(A)=0$ or $\sharp \imath ^{-1}(A)=2$. This define an involution on $X$ by $\imath(s)=\imath(\overline{s})$. \label{noloops}
\item $X=\cup_{j=0}^n X^j$ with every $X^j$ non-empty, such that $v\in X^0 \Rightarrow \sharp \imath (v)=1$ and $e\in X^1 \Rightarrow \sharp \imath (e)=2$. Without loss of generality we can assume that $X^0=V\times \Z_2$ and hence define $\sgn:X^0\to\{-1,1\}$ that satisfies $\sgn (v)=-\sgn(\overline{v})$. \label{decomposition}
\item For $j\geq 1$, there exists $\partial : X^j\to \mathbb{P}(X^{j-1})$ that satisfies for $s\in X^j$, $\partial (\overline{s})=\overline{\partial(s)}$. Furthermore if $j\geq 2$ we ask $\partial (\partial(s))$ to be an involutive set while for $e\in X^1$, $\partial (e)=\{v_1,v_2\}$, to satisfy $\sgn{v_1}\sgn{v_2}=-1$.  \label{hypd20}
\end{enumerate}
\end{Definition}

If $r\in \partial(s)$ we say that $r$ is contained in $s$ according to orientation and denoted this by $r\subset s$. We assume that $r\subset s$ for only finitely many $s$. 

\begin{Example}
As intended, a simplicial complex is indeed a combinatorial differential complex. The $\imath$ operator takes an ordered simplex, forget its order and gives the subjacent set. The $\partial$ operator takes an ordered simplex and gives us the set of ordered faces. Note that we have taken the convention of giving two orientations to singletons. This should be understand as an inward copy of the vertex and an outward copy of the vertex. Then, an edge would be compose of two vertices, one taken with an outward signature, the origin, and one taken with the inward signature, the target. 
\end{Example}

\begin{Remark}
Although quite general, \cref{CDC} still does not admits some cases that one could be interested in studying. In particular, \cref{noloops} implies that we are working with non-oriented combinatorial differential complexes without loops. On one hand, the study of oriented discrete structures, that gives rise to non-symmetric differential operators, and hence non-self-adjoint Laplacian, is at the same time very interesting and not well explored from a purely spectral theoretical point of view. On the other hand, although for a infinite graph it seems rather harmless to not allow loops, in practice these kind of phenomena appears quite naturally when making a quotient and hence need to be dealt in order to develop a theory of \emph{periodic} combinatorial differential complexes. Since in this article we are confined to the $\Z^{n}$ case, we need not  to deal with such technicalities. Finally, one can also notice that the assumption that $r\subset s$ for only finitely many $s$ corresponds in the graph case, to be working with locally finite graphs. Again, this is not compulsory, but it suffices for this article aims. We refer to \cite{Ke15} for a survey of problems that could be of interest in the setting of combinatorial differential complexes with unbounded geometry.
\end{Remark}

Our task now is from \cref{CDC} to reproduce the usual discrete differential calculus. With that aim, we start by defining for $1\leq j \leq n$ the vector space of $j-$\emph{cochains}
\begin{equation*}
C^j(X)=\{f:X^j\to \C:f(\overline{s})=-f(s)\}
\end{equation*}
and the \emph{discrete exterior derivative} 
\begin{equation*}
d_j:C^{j}(X)\to C^{j+1}(X)
\end{equation*}
by
\begin{equation*}%\label{dext}
d_jf(s)=\sum_{r \subset s}f(r) \ .
\end{equation*}

We stress that this definition for the exterior derivative relies only in what we consider to be the \emph{discrete differential structure}. However, to go one step further we need to introduce a measure on $X$:
\begin{equation*}%\label{weighth}
m:X\to \R_{>0}\quad ; \quad m(s)=m(\overline{s})\ .
\end{equation*} 
Let us denote by $C^j_c(X)$ the subspace of $j-$cochains of finite support. On it we can introduce an inner product by 
\begin{equation*}%\label{innerj}
\langle f,g\rangle_{X^j}:=\frac12\sum_{r\in X^j}m(r)f(r)\overline{g(r)}\ .
\end{equation*}
This inner product defines the Hilbert space $\ell^2(X^j)$ by taking the closure of $C^j_c(K)$ on $C^j(K)$. A sufficient condition for $d_j:\ell^2(X^j)\to \ell^2(X^{j+1})$ to be bounded is given by
\begin{equation*}%\label{bounded}
\sup_{s\in X^j}\{\sum_{s\subset r}\frac{m(r)}{m(s)} \}<\infty .
\end{equation*}
Hence, as a natural generalization of the degree of a vertex, we can define 
\begin{equation*}%\label{deg}
\deg_m : X\to \R_{>0}\quad ;\quad \deg_m(s)=\sum_{s\subset r}\frac{m(r)}{m(s)}\ .
\end{equation*}
We turn now to determine the adjoint $d_j^*:\ell^2(X^{j+1})\to \ell^2(X^j)$ by computing for $f\in C_c^j(X)$ and $g\in C_c^{j+1}(X)$
\begin{align}
\langle d_jf,g\rangle_{X^{j+1}}=&\frac12 \sum_{s\in X^{j+1}}m(s)df(s)\overline{g(s)}\nonumber \\ 
=&\frac12 \sum_{s\in X^{j+1}}m(s)\overline{g(s)}\sum_{r\subset s}f(r) \label{paso1} \\
=&\frac12 \sum_{r\in X^{j}}m(r)f(r) \overline{\sum_{r\subset s}\frac{m(s)}{m(r)}g(s)}\ .\label{paso2}
\end{align}
Hence we have
\begin{equation}\label{dadjunto}
d_j^*g(r)=\sum_{r\subset s}\frac{m(s)}{m(r)}g(s)
\end{equation}
once we notice that for $r\in X^j$ such that $r\notin \partial(s)$ for every $s\in X^{j+1}$ the sum in \cref{dadjunto} vanishes and hence the step from \cref{paso1} to \cref{paso2} is fully justified.

The Hodge-Laplacian of order $j$ acting on $\ell^2(X^j)$ can then be defined by 
\begin{equation}\label{HodgeLaplace}
\Delta_j(X)=d_j^*d_j+d_{j-1}d_{j-1}^*\ .
\end{equation}

This structure defined for each order can be put together defining on $\ell^2(X)=\oplus_{j=0}^n\ell^2(X^j)$ the \emph{discrete exterior derivative}
\begin{equation*}%\label{deddef}
d:=\oplus_{j=0}^{n-1} d_j\ .
\end{equation*} Using this notation we can state the following lemma.

\begin{Lemma}\label{d20}
The operator
\begin{equation*}
d:\ell^2(X)\to\ell^2(X)
\end{equation*}
satisfies
\begin{equation*}
d^2=0\ .
\end{equation*}
\end{Lemma}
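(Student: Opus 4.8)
The plan is to reduce the claim to the componentwise identities $d_{j+1}\circ d_j=0$ for $0\le j\le n-2$. Since $d=\oplus_{j=0}^{n-1}d_j$ maps $\ell^2(X^j)$ into $\ell^2(X^{j+1})$, the square $d^2$ acts blockwise as $d_{j+1}\circ d_j:\ell^2(X^j)\to\ell^2(X^{j+2})$, so it vanishes iff each of these composites does. Because $d$ is bounded and $C^j_c(X)$ is dense in $\ell^2(X^j)$, it suffices to verify the identity pointwise on finitely supported cochains, turning the whole argument into a purely algebraic computation on $X$.

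Fixing $f\in C^j_c(X)$ and $s\in X^{j+2}$, I would expand
\begin{equation*}
(d_{j+1}d_jf)(s)=\sum_{r\subset s}(d_jf)(r)=\sum_{r\subset s}\ \sum_{t\subset r}f(t),
\end{equation*}
and read the right-hand side as the sum of $f$ over the collection $\partial(\partial(s))\subset X^j$ obtained by applying the boundary map twice, each face $t$ being counted once for every $r$ with $t\subset r\subset s$.

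The key input is \cref{hypd20}: since the upper index $j+2$ is $\ge 2$, the collection $\partial(\partial(s))$ is involutive, \ie invariant under the fixed-point-free involution $t\mapsto\overline t$ (fixed-point-freeness follows from $\sharp\imath^{-1}(\imath(t))=2$ in \cref{noloops}, which forces $t\neq\overline t$). Hence $\partial(\partial(s))$ splits into disjoint pairs $\{t,\overline t\}$, and because $f\in C^j(X)$ obeys $f(\overline t)=-f(t)$, each pair contributes $f(t)+f(\overline t)=0$. Therefore $(d_{j+1}d_jf)(s)=0$ for every $s$ and every $f$, which gives $d_{j+1}\circ d_j=0$ and thus $d^2=0$.

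The only delicate point is the bookkeeping: one must treat $\partial(\partial(s))$ as a multiset carrying the multiplicities coming from the double sum, and check that \emph{involutive} is understood in this multiset sense, so that the pairing $t\leftrightarrow\overline t$ cancels term by term. The lowest case $j=0$, where $\partial(\partial(s))\subset X^0$ and one invokes the antisymmetry $f(\overline v)=-f(v)$ together with the edge condition $\sgn(v_1)\sgn(v_2)=-1$ from \cref{hypd20}, deserves a separate glance since it is where the orientation convention on singletons enters; but no new idea beyond the pair-cancellation is needed there.
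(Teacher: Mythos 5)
Your proposal is correct and follows essentially the same route as the paper: expand $(d_{j+1}d_jf)(s)$ as the double sum over $\partial(\partial(s))$ and cancel in pairs $\{t,\overline t\}$ using the involutivity from \cref{hypd20} together with the antisymmetry $f(\overline t)=-f(t)$. Your extra remarks (multiset bookkeeping, fixed-point-freeness of $t\mapsto\overline t$, the $j=0$ case) merely make explicit what the paper's one-line proof leaves implicit, so no further changes are needed.
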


\begin{proof}
It is enough to note that for $f\in \ell^2(X^j)$ and $s\in X^{j+2}$ we have
\begin{equation*}
(d^2f)(s)=\[d_{j+1}(d_jf)\](s)=\sum_{r\subset s}d_jf(r)=\sum_{r\subset s}\sum_{t\subset r}f(t)=\sum_{t\in \partial(\partial (s))}f(t)=0
\end{equation*}
by recalling \cref{hypd20}. 
\end{proof}
The discrete Gauss-Bonnet operator is then given by $D:=d+d^*$. From \cref{d20} one can check that $D^2=\oplus_{j=0}^n \Delta_j(X):=\Delta(X)$, where $\Delta(X)$ is the full Hodge-Laplacian. 

To complete the necessary ingredients to state a Hodge theory on $X$ we only need the following Corollary.
\begin{Corollary}
\begin{equation*}%\label{HodgeLaplacianKer}
\Ker (\Delta_j(X))=\Ker (d_j)\cap \Ker (d_{j-1}^*)
\end{equation*}
\end{Corollary}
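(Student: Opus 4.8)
The plan is to exploit the fact that $\Delta_j(X)$ is, by its very definition \eqref{HodgeLaplace}, a sum of two manifestly non-negative operators, $d_j^*d_j$ and $d_{j-1}d_{j-1}^*$, and to reduce the kernel computation to a positivity argument. The inclusion $\Ker(d_j)\cap\Ker(d_{j-1}^*)\subseteq\Ker(\Delta_j(X))$ is immediate: if $d_jf=0$ and $d_{j-1}^*f=0$, then $\Delta_j(X)f=d_j^*(d_jf)+d_{j-1}(d_{j-1}^*f)=0$, so no work is needed for this direction.

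For the reverse inclusion I would take $f\in\Ker(\Delta_j(X))$ and pair $\Delta_j(X)f$ with $f$ in $\ell^2(X^j)$. Moving each adjoint across the pairing, the expectation value splits as $\langle\Delta_j(X)f,f\rangle_{X^j}=\norm{d_jf}_{X^{j+1}}^2+\norm{d_{j-1}^*f}_{X^{j-1}}^2$. Since $f$ lies in the kernel, the left-hand side is zero, and as both summands on the right are non-negative, each must vanish separately. This forces $d_jf=0$ and $d_{j-1}^*f=0$, i.e. $f\in\Ker(d_j)\cap\Ker(d_{j-1}^*)$, which closes the argument and yields the claimed equality of subspaces.

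The only point that requires any care — and the closest thing to an obstacle — is justifying the manipulation of inner products: that $d_j^*$ appearing in \eqref{HodgeLaplace} is genuinely the Hilbert-space adjoint of $d_j$, and that no domain subtleties arise when moving adjoints across the pairing. This is already settled by the explicit formula \eqref{dadjunto} together with the boundedness criterion $\sup_{s}\deg_m(s)<\infty$ recorded above, under which $d_j$, $d_j^*$, $d_{j-1}$ and $d_{j-1}^*$ are everywhere-defined bounded maps between the relevant $\ell^2$ spaces. In that bounded, self-adjoint setting the splitting of $\langle\Delta_j(X)f,f\rangle_{X^j}$ is unconditionally valid, so the computation goes through with no further qualification. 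In short, the proof is the standard Hodge-theoretic argument, and I would keep it to the two short inclusions above.
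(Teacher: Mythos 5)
Your proof is correct, but it follows a different route from the paper's. The paper starts from $f\in\Ker(\Delta_j(X))$, rewrites the kernel condition as the identity $d_j^*d_jf=-d_{j-1}d_{j-1}^*f$, and then applies $d_j$ to both sides: since $d_jd_{j-1}=0$ by \cref{d20}, the right-hand side dies and one gets $d_jd_j^*d_jf=0$, from which $d_jf=0$ follows ``by the properties of the adjoint'' (i.e.\ pairing with $d_jf$ gives $\lp d_j^*d_jf\rp^2=0$ and then $\lp d_jf\rp^2=\langle d_j^*d_jf,f\rangle=0$); applying $d_{j-1}^*$ instead yields $d_{j-1}^*f=0$ symmetrically. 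Your argument bypasses \cref{d20} altogether: you split the quadratic form directly, $\langle\Delta_j(X)f,f\rangle_{X^j}=\lp d_jf\rp^2+\lp d_{j-1}^*f\rp^2$, and conclude by non-negativity of both summands. This is the more standard Hodge-theoretic argument, and it is in fact slightly more general, since it uses only that $\Delta_j(X)$ is a sum of two non-negative operators and never invokes $d^2=0$; the paper's proof, by contrast, makes the role of the complex structure explicit but still needs the same positivity step at the end, so neither route avoids the adjoint pairing. You are also right that boundedness is the only point needing care, and your appeal to the condition $\sup_s\deg_m(s)<\infty$ and the explicit adjoint formula \cref{dadjunto} settles it, exactly as in the paper's setting; note additionally that you prove both inclusions explicitly, whereas the paper treats the easy inclusion $\Ker(d_j)\cap\Ker(d_{j-1}^*)\subseteq\Ker(\Delta_j(X))$ as immediate and only writes out the converse.
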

\begin{proof}
Let $f\in \Ker (\Delta_j)$. By \cref{HodgeLaplace} we have
\begin{equation}\label{kerhodge}
d_j^*d_jf=-d_{j-1}d_{j-1}^*f\ .
\end{equation}
Applying $d_j$ to both sides of \cref{kerhodge} we get from \cref{d20} that 
\begin{equation*}
d_jd_j^*d_jf=0\ .
\end{equation*}
From the properties of the adjoint we get that $f\in \Ker (d_j)$. Applying $d_{j-1}^*$ to both side of \cref{kerhodge} we get the other inclusion.
\end{proof}
Putting all this information together we get the desired \emph{Hodge decomposition} given by
\begin{equation*}%\label{HodgeTheory}
\ell2(X^j)=\Ran (d_{j-1})\oplus \Ker (\Delta_j(X)) \oplus \Ran (d_{j}^*)\ .
\end{equation*}
To close this \namecref{subsec:abstract}, let us ilustrate the supersymmetry, as per \cite{Th92}, exihibited by $d+d^*$. For this, we define 
\begin{equation*}
\ell^2(X)_{\text{even}}=\bigoplus_{j \text{ even}}\ell^2(X^j)\quad ; \quad \ell^2(X)_{\text{odd}}=\bigoplus_{j \text{ odd}}\ell^2(X^j)
\end{equation*}
which satisfy $\ell^2(X)=\ell^2(X)_{\text{even}}\oplus \ell^2(X)_{\text{odd}}$. Furthermore,
\begin{equation}\label{tauinvolution}
\tau f=\begin{cases}f &\text{ if }f\in\ell^2(X)_{\text{even}}\\
-f &\text{ if }f\in\ell^2(X)_{\text{odd}}\\
\end{cases}
\end{equation}
defines an involution on $\ell^2(X)$ such that the restriction of $D$ to $\ell^2(X)_{\text{odd}}$ anticommutes with $\tau$. Given $m\geq 0$, the Dirac operator on $X$ is given by
\begin{equation*}
\mathbb{D}_m=D+\tau m\ .
\end{equation*}
For simplicity, in this paper  we consider only the massless case $m=0$.

%----------------------------------------------------------------------------------------------------------------------------------------------------------
%----------------------------------------------------------------------------------------------------------------------------------------------------------
\subsection{Application to \texorpdfstring{$\Z^n$}{Zn}}\label{subsec:defzn}
%----------------------------------------------------------------------------------------------------------------------------------------------------------
%----------------------------------------------------------------------------------------------------------------------------------------------------------

As indicated in \cref{sec:intro}, our aim is to endow $\Z^n$ with a discrete differential structure that, in contrast with the simplicial one, is not trivial for dimensions $j\geq2$. 
\begin{Theorem}
There exist a \emph{combinatorial differential structure} over $\Z^n$, as per \cref{CDC}, such that $\ell^2(X^j)$ is infinite dimensional for every $1\leq j \leq n$.
\end{Theorem}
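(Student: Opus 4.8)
The plan is to equip $\Z^n$ with its natural \emph{cubical} cell structure, in which the $j$-cells are the elementary $j$-dimensional cubes of the lattice. Concretely I would take $V=\Z^n$ and, for $0\le j\le n$, set $X^j=\Z^n\times \OO_j$, where $\OO_j$ denotes the set of oriented $j$-element subsets of $\{1,\dots,n\}$, i.e.\ $j$-subsets $I\subseteq\{1,\dots,n\}$ equipped with an orientation (an ordering modulo even permutations). A pair $s=(x,[i_1,\dots,i_j])$ represents the oriented elementary cube based at $x$ spanning the coordinate directions $i_1,\dots,i_j$, and the involution is orientation reversal, $\overline{(x,[i_1,\dots,i_j])}=(x,[i_2,i_1,i_3,\dots,i_j])$. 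For $j=0$ the set $\OO_0$ consists of the two orientations of $\varnothing$, so $X^0=\Z^n\times\Z_2=V\times\Z_2$ as demanded by \cref{decomposition}, and $\sgn$ is the projection onto the $\Z_2$ factor. The map $\imath$ sends a cube to its vertex set, $\imath(x,[I])=\{x+\sum_{k\in I'}e_k:I'\subseteq I\}$, which is a singleton for $j=0$ and has two elements for $j=1$; all the $X^j$ are nonempty since $\binom{n}{j}\ge 1$.

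First I would verify axiom \cref{noloops}: the vertex set of an elementary cube determines its base point (the coordinatewise minimum) and its spanning directions uniquely, so the fibre $\imath^{-1}(A)$ over an admissible $A$ consists of exactly the two orientations of that cube, giving $\sharp\imath^{-1}(A)\in\{0,2\}$ and recovering $s\mapsto\overline s$. Next I would define $\partial\colon X^j\to\mathbb P(X^{j-1})$ by the standard cubical formula, in which direction $i_k$ contributes a lower face $(x,[I\setminus i_k])$ and an upper face $(x+e_{i_k},[I\setminus i_k])$:
\begin{equation*}
\partial(x,[i_1,\dots,i_j])=\bigcup_{k=1}^{j}\Big\{\,(-1)^{k}\cdot(x+e_{i_k},[i_1,\dots,\widehat{i_k},\dots,i_j]),\ (-1)^{k+1}\cdot(x,[i_1,\dots,\widehat{i_k},\dots,i_j])\,\Big\},
\end{equation*}
with the convention $(+1)\cdot r=r$ and $(-1)\cdot r=\overline r$. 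Since reindexing after an orientation reversal flips the relevant signs, a direct check gives $\partial(\overline s)=\overline{\partial s}$; and for an edge $(x,[i])$ one gets $\partial(x,[i])=\{(x,+),(x+e_i,-)\}$, whose endpoints carry opposite signs, which is the last requirement in \cref{hypd20}.

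The main point, and the only step that requires genuine care, is checking that $\partial(\partial(s))$ is an involutive set, the set-theoretic incarnation of $\partial^2=0$ that feeds \cref{d20}. Here I would invoke the classical cubical bookkeeping: expanding $\partial\partial(x,[I])$, each codimension-two face attached to a pair $\{i_k,i_\ell\}$ is produced exactly twice, once through the order $(i_k,i_\ell)$ and once through $(i_\ell,i_k)$, and the alternating signs $(-1)^{k}$ together with the two endpoint choices are arranged so that the two occurrences differ precisely by the bar involution. Thus $\partial\partial(s)$ is closed under $r\mapsto\overline r$ with no fixed points—exactly as in the simplicial case—so it is involutive, and the delicate part is purely the matching of the signs against the reindexing of $I\setminus\{i_k,i_\ell\}$. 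Finally, infinite-dimensionality is immediate: for $1\le j\le n$ one has $\OO_j\neq\varnothing$, and modulo the relation $f(\overline s)=-f(s)$ the cochain space factors as $\ell^2(X^j)\cong\ell^2(\Z^n)\otimes\C^{\binom{n}{j}}$, which is infinite-dimensional because $\Z^n$ is infinite and $\binom{n}{j}\ge1$.
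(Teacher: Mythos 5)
Your construction is, in substance, the paper's own proof of this theorem: both endow $\Z^n$ with its cubical cell structure, take oriented elementary $j$-cubes as $X^j$, define $\partial$ by the alternating-sign cubical boundary, reduce axiom \cref{hypd20} to the involutivity of $\partial(\partial(s))$, and read off infinite-dimensionality from $\ell^2(X^j)\cong\ell^2(\Z^n)\otimes\C^{\binom{n}{j}}$ (the paper realizes that identification explicitly later, via the unitaries $U_j$ of \cref{sec:hodgedirac}). Where you genuinely diverge is in the encoding of orientation, and the divergence has consequences. The paper represents an oriented cell by a base point and a strictly \emph{monotone} direction list, so the reversed cell sits at the opposite corner, $\overline{s}=(\lceil s\rceil;\hat{s}^*)$; every cell then has a unique canonical representative, at the price of the intricate interplay between restriction and reversal (\cref{stecnico}) and the sixteen-case verification of \cref{todofunciona} in the appendix. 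You instead fix the base point and take orientations to be orderings modulo even permutations, which buys you the classical cubical sign cancellation for $\partial^2=0$: each codimension-two face arises exactly twice, once per removal order, with signs differing by a single factor $(-1)$, hence bar-related. This is correct and substantially shorter than the paper's case analysis, but your encoding creates two obligations the paper's avoids: your boundary formula is written on a representative ordering, so you must check it is well defined on orientation classes (it is --- an even permutation of $(i_1,\dots,i_j)$ flips the face signs an even number of times, and an odd one flips them all, which is exactly the anti-equivariance $\partial(\overline{s})=\overline{\partial(s)}$); and the ``classical cubical bookkeeping'' you invoke is precisely the content the paper proves at length, so in a complete write-up the sign matching against the reindexing of $I\setminus\{i_k,i_\ell\}$ must be carried out rather than cited.

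There is one concrete defect as written. A singleton $\{i\}$ admits a unique ordering, hence a unique class modulo even permutations, so your $X^1=\Z^n\times\OO_1$ gives each edge only \emph{one} orientation: $\sharp\,\imath^{-1}(A)=1$ for edge vertex sets, violating \cref{noloops} of \cref{CDC}, and the terms $\overline{(x,[i])}$ that your own computation of $\partial\partial$ on a square produces are then undefined. You patched exactly this degeneracy at $j=0$ by decreeing two orientations of $\varnothing$, but omitted the same patch at $j=1$. The fix is the identical fiat: take $\OO_1=\{1,\dots,n\}\times\Z_2$ with the bar swapping the $\Z_2$ factor and $\partial\overline{(x,[i])}=\{(x,-),(x+e_i,+)\}$; the paper's version of this convention is to read the singleton direction as ``increasing'' or ``decreasing'', with the reversed edge based at the other endpoint. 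With that convention added, your argument goes through.
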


Let us denote by $\{\delta_1,\dots,\delta_n\}$ the canonical basis of $\Z^n$. An hyper-cube $s\subset \Z^n$ of dimension $j$ has $2^j$ vertices. In particular, there exists a unique way of describing $s$ by a $2^j$-tuple
\begin{equation*}
(x_1,x_2,\dots x_{2^j-1},x_{2^j})
\end{equation*}
that satisfies $x_{i+1}-x_i=\pm\delta_{l}$ with $x_{i+1}-x_i=\delta_{l_i}$ and $l_i<l_{i+1}$ for $1\leq i\leq j$. This is considered an orientation on $s$ and by convention we refer to this orientation as positive. The opposite orientation is given by 
\begin{equation*}
(x_{j+1},x_j,x_{j-1},\dots,x_2,x_1,x_{2^j},x_{2^j-1},\dots,x_{j+2})=(\tilde{x}_1,\dots,\tilde{x}_{2^j})
\end{equation*} and satisfies $\tilde{x}_{i+1}-\tilde{x}_i=-\delta_{\tilde{\ell}_i}$ and $\tilde{\ell}_i>\tilde{\ell}_{i+1}$ for $1\leq i\leq j$. Hence, an oriented hyper-cube $s$ of dimension $j$ in $\Z^n$ can be  described by a ``base'' point in $\Z^n$ which given by the  first element in the $2^j$-tuple, and   $j$ elements of the canonical basis ordered on either increasing or decreasing fashion.   Increasing and decreasing correspond to  the positive and negative orientations, respectively.
Denoting the base point by $\lfloor s\rfloor$ we thus write %! Maybe it is better to cite references of the indexing of points and orientations of the hyper-cube. Or possibly a sketch of the proof is useful.
\begin{equation}\label{sequiv}
s\equiv (\lfloor s\rfloor;\delta_{l_1},\dots,\delta_{l_j})\ . 
\end{equation}
In particular, in the first hyperoctant a positive oriented hyber-cube $s$  satisfies that  $\lfloor s\rfloor$ is the smallest vertice. In contrast, for a negatively oriented one $\lfloor s\rfloor$ is its biggest vertice.

For $j=1$, the notation $(x;\delta_\ell)$ conflates $(x,x+\delta_{\ell})$ with $(x,x-\delta_{\ell})$. In the first case the singleton $\delta_{\ell}$ needs to be understood as ordered in an increasing fashion while in the later case in decreasing order. For $j=0$, we repeat our convention of considering each $x \in \Z^n$ with an outward and inward orientation. We denote these orientation by $(x;-)$ and $(x;+)$ respectively.

Let $P^{j,n} $ be the set of functions 
\begin{equation}\label{Pjndef}
P^{j,n}:=\{I:\{1,\dots,j\}\to\{1,\dots,n\}; I \,\,{\rm strictly \,\, monotone}\}
\end{equation}
 By \cref{sequiv} a simplex $s$ of dimension $j$ defines an element $\hat s$ in $P^{j,n} $. Hence we can also use the following notation 
\begin{equation}\label{shat}
(\lfloor s\rfloor;\delta_{\hat{s}(1)},\dots,\delta_{\hat{s}(j)})\equiv(\lfloor s\rfloor;\hat{s})\ .
\end{equation}
 On $P^{j,n}$ we can define an involution by
\begin{equation*}%\label{sinvolution}
\hat{s}^*(i)=\hat{s}(j-i+1)\ ,
\end{equation*}
and a signature by taking $+1$ if it is increasing and $-1$ if it is decreasing. By assumption we have that $\sgn(s)=\sgn(\hat{s})$. We set
\begin{equation*}%\label{senteroopuesto}
\lceil s \rceil:=\lfloor s \rfloor +\sgn(s)\sum_{i=1}^j\delta_{\hat{s}(i)}\ .
\end{equation*}
This notation allow us to relate an hyper-cube  and its opposite orientation by
\begin{equation*}%\label{sopuesto}
\overline{s}=(\lceil s \rceil;\hat{s}^*)=:(-1)s\ .
\end{equation*}

We set $X(\Z^n)$ as the set of oriented hyper-cubes. It is easy to see that it satisfies \cref{noloops} of \cref{CDC} by taking $\imath:X(\Z^n)\to \mathbb{P}(\Z^n)$ which for every hyper-cube gives its set of vertices.
Denoting by $X^j$ the set of oriented hyper-cubes of dimension $j$ we can write
\begin{equation*}
X(\Z^n)=\cup_{j=0}^n X^j
\end{equation*}
in order to satisfy \cref{decomposition} of \cref{CDC}. We have left to define 
\begin{equation*}
\partial : X^{j+1}\to \mathbb{P}(X^j)\ .
\end{equation*}
With that aim, starting by the application \cref{shat} and for $1\leq i_0 \leq j$, we define $\prescript{}{i_0}{\hat{s}}\in P^{j-1,n}$ by
\begin{equation*}%\label{defi0shat}
\prescript{}{i_0}{\hat{s}}(i)=\begin{cases}
\hat{s}(i) &\text{ if }i<i_0\ ;\\
\hat{s}(i+1) &\text{ if }i_0\leq i\ .
\end{cases}
\end{equation*}
For $j\geq 2$ and $s\in X^j\subset X(\Z^n)$ we define
\begin{equation}\label{sborde}
\partial (s) := \cup_{i=1}^j \{(-1)^{j-i} (\lfloor s \rfloor;\prescript{}{i}{\hat{s}}) \}\bigcup\cup_{i=1}^j\{(-1)^{i} (\lceil s \rceil;(\prescript{}{i}{\hat{s}})^*)  \}\ .
\end{equation}

Let us illustrate this definition by an example. Let $s$ be a cube, $s=(x;\delta_1,\delta_2,\delta_3)$. Then, $\partial (s)$ is composed of $6$ faces. Three of these faces contain $x=\lfloor s \rfloor$ :
\begin{equation}%\label{ejemplo1}
(x;\delta_1,\delta_2) \, ;\, \overline{(x;\delta_1,\delta_3)}\, ;\, (x;\delta_2,\delta_3)\ ;
\end{equation}
while three contain $x+\delta_1+\delta_2+\delta_3=\lceil s \rceil$ :
\begin{equation}%\label{ejemplo2}
\, \overline{(x+\delta_1+\delta_2+\delta_3;\delta_3,\delta_2)} \, ;\, (x+\delta_1+\delta_2+\delta_3;\delta_3,\delta_1)\, ;\overline{(x+\delta_1+\delta_2+\delta_3;\delta_2,\delta_1)}\ .
\end{equation}

We complete our definition for $s\in X^1$ by $\partial (x,y)=\{(x;-),(y;+)\}$. We can now state the result that would show that this definition is in accordance with \cref{hypd20} of \cref{CDC}

\begin{Proposition}\label{todofunciona}
Let $s\in X^j\subset X(\Z^n)$, with $j\geq2$. Then $\partial (\overline{s})=\overline{\partial(s)}$ and if $r\in\partial(\partial(s))$, we have that $\overline{r}\in\partial(\partial(s))$.
\end{Proposition}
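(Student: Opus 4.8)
The plan is to verify the two claimed identities directly from the explicit formula \cref{sborde} for the boundary, exploiting the symmetry between the two halves of that union. First I would establish that $\partial(\overline{s})=\overline{\partial(s)}$. Recall $\overline{s}=(\lceil s\rceil;\hat{s}^*)$, so for $\overline{s}$ the roles of base and top points swap, $\lfloor \overline{s}\rfloor=\lceil s\rceil$ and $\lceil\overline{s}\rceil=\lfloor s\rfloor$, and the index function becomes $\hat{s}^*$. The key bookkeeping is to relate the deleted-index operation on $\hat{s}^*$ to that on $\hat{s}$: since $\hat{s}^*(i)=\hat{s}(j-i+1)$, deleting position $i$ from $\hat{s}^*$ corresponds to deleting position $j-i+1$ from $\hat{s}$, that is $\prescript{}{i}{(\hat{s}^*)}=(\prescript{}{j-i+1}{\hat{s}})^*$. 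Substituting this into \cref{sborde} written for $\overline{s}$ and reindexing $i\mapsto j-i+1$ should map the first half of $\partial(\overline{s})$ onto the orientation-reversal of the second half of $\partial(s)$ and vice versa, with the signs $(-1)^{j-i}$ and $(-1)^i$ matching up after the reindexing. This is the step where I expect the main difficulty: keeping the signs, the base-point assignments, and the starred index functions consistent through the reindexing is delicate, and one must check that reversing orientation of each individual face (the bar operation on $X^{j-1}$) is exactly what the formula produces.

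Once $\partial(\overline{s})=\overline{\partial(s)}$ is in hand, the second assertion—that $\partial(\partial(s))$ is stable under orientation reversal—should follow more structurally. The claim is that for $r\in\partial(\partial(s))$ one has $\overline{r}\in\partial(\partial(s))$. I would argue that each $(j-2)$-face $r$ arises from deleting two distinct indices from $\hat{s}$, and that there are exactly two ways to reach the same underlying face: delete index $a$ first (giving a $(j-1)$-face) then index $b$, versus delete $b$ first then $a$. These two routes produce the same underlying hypercube but, because of the alternating signs in \cref{sborde} and the swap in base point, they produce it with opposite orientations. Hence both $r$ and $\overline{r}$ appear in $\partial(\partial(s))$, which is precisely the statement that $\partial(\partial(s))$ is an involutive set as required by \cref{hypd20}.

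Concretely, I would fix $r$ coming from deleting positions $a<b$ of $\hat{s}$ and write out its two preimages under the iterated boundary. Using the first identity $\partial(\overline{s'})=\overline{\partial(s')}$ applied to the intermediate $(j-1)$-faces streamlines this: it lets me convert the contribution of a negatively oriented intermediate face into the reversal of the contribution of its positively oriented counterpart, so that the two routes to $r$ can be compared term by term. The sign contributed by the pair $(a,b)$ along one route versus the pair $(b,a)$ along the other differs by $(-1)$—this is the discrete analogue of $\partial^2=0$ cancellation in the simplicial case—which forces the opposite orientation and yields $\overline{r}\in\partial(\partial(s))$. I would close by remarking that this involutivity is exactly the hypothesis consumed in \cref{d20} to obtain $d^2=0$, so the proposition confirms that the $\Z^n$ construction of \cref{subsec:defzn} is a genuine combinatorial differential complex in the sense of \cref{CDC}.
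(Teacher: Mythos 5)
Your first part matches the paper's own argument almost verbatim: the identity $\prescript{}{i}{(\hat{s}^*)}=(\prescript{}{j-i+1}{\hat{s}})^*$ that you isolate is precisely parts (1)--(2) of the paper's \cref{stecnico}, and the reindexing $i\mapsto j-i+1$ together with the sign check between $(-1)^{j-i}$ and $(-1)^{i}$ is exactly how the paper derives $\partial(\overline{s})=\overline{\partial(s)}$ from \cref{sborde}. No issue there.

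The second part, however, contains a genuine gap. You index the faces $r\in\partial(\partial(s))$ by the unordered pair $\{a,b\}$ of deleted positions of $\hat{s}$ and claim each underlying face is reached in exactly two ways, namely the two orders of deletion, with the sign flip coming from the alternating signs. For hyper-cubes this is not the right indexing: unlike the simplicial case, deleting a direction also entails choosing a \emph{side} --- the $A$-half (based at $\lfloor s\rfloor$) or the $B$-half (based at $\lceil s\rceil$) of \cref{sborde} --- so each pair $\{a,b\}$ yields \emph{four} distinct underlying codimension-$2$ faces, one per choice of bottom or top for each deleted direction, and the two oriented occurrences of a fixed face generally arise through compositions of \emph{different} types. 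Already for the square $s=(x;\delta_1,\delta_2)$, the vertex $x+\delta_1$ occurs once as an end of the $A$-type edge $(x;\delta_1)$ and once as an end of the $B$-type edge through $\lceil s\rceil$; a bookkeeping that records only the deleted pair and its order cannot distinguish $x+\delta_1$ from $x$, $x+\delta_2$ or $\lceil s\rceil$. Moreover the advertised factor $(-1)$ does not come merely from the alternating signs: after one deletion the surviving positions shift, and one needs the commutation rule $\prescript{}{i_1}{(\prescript{}{i_0}{\hat{s}})}=\prescript{}{(i_0-1)}{(\prescript{}{i_1}{\hat{s}})}$ for $i_1<i_0$, which is part (3) of \cref{stecnico} and which your sketch never identifies --- it is this shift by one that ultimately produces the orientation reversal. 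Your idea of normalizing intermediate faces via $\partial(\overline{t})=\overline{\partial(t)}$ is sound and does tame the orientation bookkeeping, and once the side data is added your local pairing argument could be completed into a more conceptual alternative to the paper's proof, which instead classifies all compositions $A_l(A_i)$, $B_l(A_i)$, $A_l(B_i)$, $B_l(B_i)$ by the parities of $i$ and $l$ in a sixteen-row table, groups the resulting faces by their base points, and verifies the matching by explicit counting (done in detail for the faces based at $\lfloor s\rfloor$ with $j$ even). As it stands, though, the central combinatorial claim of your second part is false as stated, so the proposal does not yet prove the involutivity of $\partial(\partial(s))$ required by \cref{hypd20}.
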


The proof of \cref{todofunciona} is rather long and elementary so we posponed to the \cref{apendice}. 
%Here, we will just exihibit it in the case $n=3$ following the example above. Recall that $\lceil s\rceil=x+\delta_1+\delta_2+\delta_3$. Then, one can check that applying $\partial$ to each term of \cref{ejemplo1,ejemplo2} one obtains
%  \begin{align*}\label{ejemplo3}
%  \partial((x;\delta_1,\delta_2))=&\{ \overline{(x,x+\delta_2)},(x,x+\delta_1),\overline{(x+\delta_1+\delta_2,x+\delta_1)},(x+\delta_1+\delta_2,x+\delta_2)\}\ ;\\
%  \partial(\overline{(x;\delta_1,\delta_3)})=&\{(x,x+\delta_3),\overline{(x,x+\delta_1)},(x+\delta_1+\delta_3,x+\delta_1),\overline{(x+\delta_1+\delta_3,x+\delta_3)}\}\ ;\\
%  \partial((x;\delta_2,\delta_3))=&\{ \overline{(x,x+\delta_3)},(x,x+\delta_2),\overline{(x+\delta_2+\delta_3,x+\delta_2)},(x+\delta_2+\delta_3,x+\delta_3)\}\ ;\\
%  \partial(\overline{(\lceil s\rceil;\delta_3,\delta_2)})=&\{(\lceil s\rceil,x+\delta_1+\delta_2),\overline{(\lceil s\rceil,x+\delta_1+\delta_3)},(x+\delta_1,x+\delta_1+\delta_2),\overline{(x+\delta_1,x+\delta_1+\delta_2)}\}\ ;\\
%  \partial((\lceil s\rceil;\delta_3,\delta_1))=&\{\overline{(\lceil s\rceil,x+\delta_1+\delta_2)},(\lceil s\rceil,x+\delta_2+\delta_3),\overline{(x+\delta_2,x+\delta_1+\delta_2)},(x+\delta_2,x+\delta_2+\delta_3)\}\ ;\\
%  \partial(\overline{(\lceil s\rceil;\delta_2,\delta_1)})=&\{(\lceil s\rceil,x+\delta_1+\delta_3),\overline{(\lceil s\rceil,x+\delta_2+\delta_3)},(x+\delta_3,x+\delta_1+\delta_3),\overline{(x+\delta_2,x+\delta_1+\delta_3)}\}\ .
% \end{align*}

We have hence endowed $\Z^n$ with a \emph{combinatorial differential structure}. By extension, we also have this structure for $h\Z^n:=\{(hz_1,\hdots,hz_n):z_l\in\Z\}$ with  $h>0$. We denote

\begin{equation*}
X(h\Z^n)=\bigcup_{j=0}^n X_h^j\ .
\end{equation*}

The only ingredient missing for having the Gauss-Bonnet operator is the measure $m:X(h\Z^n)\to \R_{>0}$. For $s\in X_h^j$ we set 

\begin{equation*}
m(s)=h^{-2j}\ .
\end{equation*}
This measure is chosen in order to ensure that the norm of $\delta_{s_0}$, the delta function over $s_0\in X_h^j$,  correspond to the volume of the (shrunken) $j$-hyper-cube. Indeed we can notice that
\begin{equation*}
||\delta_{s_0}||=\left(\sum_{s\in X}m(s)|\delta_{s_0}(s)|^2\right)^{\frac12}=\left(m(s_0)\right)^{\frac12}=h^{-j}\ .
\end{equation*}
It follows that we have the following combinatorial differential operators on $\ell^2(X(h\Z^n))$:
\begin{equation}\label{d_h}
(d_hf)(s)=\sum_{r \subset s}f(r)\quad ; \quad( d_h^*f)(s)=\frac1{h^2} \sum_{s \subset r}f(r)\ .
\end{equation}
One can check, using for example \cref{lemmaseccion3,le00}, that the spectrum of $d_h+d_h^*$ is given by
\begin{equation*}%\label{espectroD_h}
\left[-\frac{\sqrt{4n}}h,\frac{\sqrt{4n}}h\right]\ .
\end{equation*}

For ease of notation, when $h=1$ we denote the corresponding Hilbert space by $\ell^2(X(\Z^n))$ and drop the $h$ subscript.

%----------------------------------------------------------------------------------------------------------------------------------------------------------
%----------------------------------------------------------------------------------------------------------------------------------------------------------
\subsection{The particular case \texorpdfstring{$\Z^1$}{Z1}}%\label{sec:zeta1}
%----------------------------------------------------------------------------------------------------------------------------------------------------------
%----------------------------------------------------------------------------------------------------------------------------------------------------------

The study of 1-dimensional discrete Dirac operator clearly outweigh its higher dimensional counterpart as can be attested for example by \cite{BMT20,CIKS20,POO21} and reference therein. In the topic of the continuum limit, \cite{SU21x} focus in the $n=2$ and argue that for $n=1$ the discrete Dirac operator on $\ell^2(\Z)^2$ is given by
\begin{equation*}
\mathbb{D}_m=\begin{pmatrix}
m & D_{-}\\
D_+ &-m
\end{pmatrix}
\end{equation*}
where $D_{+}f(x)=f(x+1)-f(x)$ and $D_{-}=D_{+}^*$, is unitarily equivalent to a Schrödinger operator on $\ell^2(\Z)$. From this, one can use the result from \cite{NT21} to study its continuum limit. From another point view, in contrast with the modifications needed for bigger dimension, \cite{CGJ22x} shows that for $n=1$ the continuum limit holds in the norm resolvent sense for $\mathbb{D}_m$. These results coincide with the present work in the sense that $\mathbb{D}_m$ is unitarily equivalent to $d+d^*+m\tau$, where $\tau$ was defined on \cref{tauinvolution}. This equivalence can be implemented by the unitary operator $\mathbb{U}:\ell^2(X(\Z))\to \ell^2(\Z)^2$ is given by
\begin{equation*}
(\mathbb{U}f)(x)=(f(\{x,+\}),f(x,x+1))\ .
\end{equation*}

%----------------------------------------------------------------------------------------------------------------------------------------------------------
%----------------------------------------------------------------------------------------------------------------------------------------------------------
\section{The operator \texorpdfstring{$d$}{d} acting on discrete differential forms}\label{sec:hodgedirac}
%----------------------------------------------------------------------------------------------------------------------------------------------------------
%----------------------------------------------------------------------------------------------------------------------------------------------------------

In this section we construct  a unitary representation of  $\ell^2(X(\Z^n))$ which will be useful for the subsequent computations. For this construction we use standard notation from multilinear algebra and differential geometry.

First, for $0\leq j\leq n$ let $\bigwedge^j(\Z^n)$ be the vector space of alternating $j$-linear complex valued functions on $(\Z^n)^j$, where we identify $\bigwedge^0(\Z^n)$ with $\C$.  We refer to a $\omega\in\bigwedge^j(\Z^n) $ as a $j$-form. We define $dx^l\in\bigwedge^1(\Z^n)$ by
\begin{equation*}
dx^l(\delta_k)=\begin{cases}
1 &\text{ if }l=k\\
0 &\text{ if }l\neq k
\end{cases}
\ .
\end{equation*} 

A basis for $\bigwedge^1(\Z^n)$ is given by $\{dx^1,\hdots,dx^n\}$. The wedge product $\wedge:\bigwedge^k(\Z^n)\times\bigwedge^j(\Z^n)\to\bigwedge^{k+j}(\Z^n)$ for $\eta\in\bigwedge^k(\Z^n)$ and $\omega\in\bigwedge^j(\Z^n)$ is given by
\begin{equation*}
 \eta\wedge\omega(\mu_1,\hdots,\mu_{k+j})=\frac1{k!j!}\sum_{\sigma \in S_{k+j}}{\rm sgn}(\sigma)\eta(\mu_{\sigma(1)},\hdots,\mu_{\sigma(k)})\omega(\mu_{\sigma(k+1)},\hdots,\mu_{\sigma(k+j)})\ .
\end{equation*}
Then, a basis for $\bigwedge^j(\Z^n)$ is given by
\begin{equation*}
\{\omega\in \bigwedge\nolimits^j(\Z^n) : \omega=dx^{l_1}\wedge\hdots\wedge dx^{l_j} \text{ and }l_1<\dots<l_j \}\ .
\end{equation*}

Let $P^{j,n}_+$ be the elements in $P^{j,n}$  (defined in \cref{Pjndef}) that are strictly increasing. For $I\in P^{j,n}_+$ we define $dx^I:=dx^{I(1)}\wedge\hdots\wedge dx^{I(j)}$. % and  $\omega_I:=\omega_{I(1),\dots,I(j)}$.
Using this basis we can define an inner product on $\bigwedge^j(\Z^n)$ by:
\begin{equation}\label{26sep22}
\langle dx^I;dx^{I'}\rangle_{\bigwedge\nolimits^j(\Z^n)}:=\begin{cases}
1 &\text{ if } I=I'\\
0&\text{ if else }
\end{cases}\ .
\end{equation}

Let $\Omega^j(\Z^n):=\{\omega:\Z^n\to\bigwedge^j(\Z^n)\}$ be the vector space sections over $\bigwedge^j(\Z^n)$. We denote by $\Omega_c^j(\Z^n)$ the subspace of compactly supported sections.
Each $\omega$ in $\Omega^j(\Z^n)$ can be written by
\begin{equation}\label{14dec22}
\omega(\mu)=\sum_{I\in P^{j,n}_+} \omega_{I}(\mu)  dx^{I},
\end{equation}
where $\omega_{I}:\Z^n\to\C$. On $\Omega_c^j(\Z^n)$ we consider the inner product induced by the inner product on $\bigwedge^j(\Z^n)$ given by \cref{26sep22}. Using \cref{14dec22} one can check that
\begin{equation}\label{omegajinterno}
\langle \omega , \eta\rangle_{\Omega_c^j(\Z^n)}=\sum_{\mu\in\Z^n}\sum_{I\in P^{j,n}_+} \omega_{I}(\mu)\overline{\eta_{I}(\mu)}\ .
\end{equation}
The completion of $\Omega_c^j(\Z^n)$ under the norm defined by \cref{omegajinterno} is $\ell^2(\Z^n;\bigwedge^j(\Z^n))$. 

Let us introduce the operator  $\tilde d$  in $\Omega^j(\Z^n)$ using  the standard ideas of differential forms. First, for a function $\omega:\Z^n\to\C$  set 
$$
\mathcal{D}_{l}\omega(\mu)= \omega(\mu+\delta_l)-\omega(\mu).
$$
Let $\tilde d_{0}:\Omega^0(\Z^n)\to \Omega^1(\Z^n)$ be the exterior derivative defined by
\begin{equation}\label{15sep22}   
\tilde d_{0} \omega=\sum_{l=1}^n(\mathcal{D}_{l} \omega) d x_l.
\end{equation}

We are interested in comparing $d$ from \cref{d_h} to $\tilde d$. With that in mind, let us define, for every $0\leq j\leq n$, the unitary operator ${U}_j:\ell^2(X^j)\to\ell^2(\Z^n;\bigwedge^j(\Z^n))$ by
\begin{equation}\label{14dec22b}
({U}_jf)(\mu):=\sum_{I\in P^{j,i}_+} f(\mu;\delta_{I(1)},\hdots, \delta_{I(j)}) \, dx^I\ .
\end{equation}
Noticing that for $f\in \ell^2(X^0)$
\begin{align}
[(U_1\circ d_0)f](\mu)=\sum_{i=1}^j (d_0f)(\mu;\delta_j)dx^{i}=\sum_{i=1}^j (d_0f)(\mu,\mu+\delta_i)dx^{i}=&\sum_{i=1}^j (f(\mu+\delta_i)-f(\mu))dx^{i}\nonumber \\
=&\sum_{i=1}^j \mathcal{D}_if(\mu)dx^{i}\label{Udconmuta}
\end{align}
we obtain that $U_1\circ d_0=\tilde{d}_0\circ U_0$.

The operator
 $\tilde d_j:\Omega^j(\Z^n)\to \Omega^{j+1}(\Z^{n})$ is defined by
\begin{equation}\label{defd}
\tilde d_j (\sum_{I\in P^{j,n}_+}\omega_I dx^I)=\sum_{I\in P^{j,n}_+}(\tilde d_0 \omega_I)\wedge dx^I.
\end{equation}
Arguing as in \cref{Udconmuta} one can check that we have the following commutative diagram 
\begin{figure}[h]
\begin{center}
\begin{tikzcd}
\ell^2(X^j) \arrow[r, "{{U}_j}"] \arrow[d, "d_j"]
& \ell^2(\mathbb{Z}^n;\bigwedge^j (\mathbb{Z}^n)) \arrow[d, "{\tilde{d}_j}"] \\
\ell^2(X^{j+1}) \arrow[r, "{{U}_{j+1}}"]
& \ell^2(\mathbb{Z}^n;\bigwedge^{j+1} (\mathbb{Z}^n)).
\end{tikzcd}
\end{center}
\end{figure}

To have a global perspective we define the \emph{total exterior derivative} $\tilde{d}$ as the off--diagonal operator given by
\begin{equation*}
\tilde{d}:\oplus_{j=0}^{n-1}\Omega^j(\Z^n)\to\oplus_{j=1}^{n}\Omega^j(\Z^n)\ .
\end{equation*}
and from the family of unitary operators $U_j$, the diagonal operator
\begin{equation*}
U:\ell^2(X(\Z^n))\to\bigoplus_{j=0}^n \ell^2(\Z^n;\bigwedge\nolimits^j(\Z^n))\ .
\end{equation*}
Then, we can summarize the results so far by 
\begin{equation}\label{Udconmutagrande}
\tilde{d}\circ U= U\circ d\ .
\end{equation}

We turn now our attention to $\tilde{d}^*$. For this, let us consider $I\in P^{j,n}_+$ and define $J_I:\{j+1,\dots,n\}\to\{1,\dots,n\}\setminus\{I(1),\dots,I(j)\}$ to be the only strictly increasing function between those two sets. Then we denote by $IJ_I$ %\in S_n$ 
the permutation given by
\begin{equation*}%\label{IJdef}
IJ_I(i)=\begin{cases}
I(i) \quad &i\leq j\\
J_I(i) \quad &j< i
\end{cases}\ .
\end{equation*}
 The \emph{Hodge star operator} is defined as the operator $*:\bigwedge^j(\Z^n)\to\bigwedge^{n-j}(\Z^n)$ determined by 
$$
*dx^I={\rm sign}(IJ_I) dx^{J_I}\ .
$$
Then, we can compute the adjoint operator $\tilde d^*:\Omega^{j+1}(\Z^n)\to \Omega^{j}(\Z^{n})$ as
\begin{equation}\label{22sep22}
\tilde d^*=(-1)^{ni+1}*\tilde d*.
\end{equation}
Let us denote by $\Delta$ the discrete Laplacian on $\Z^n$ given by:
$$
(\Delta f)(\mu)=\sum_{l=1}^n f(\mu+\delta_l)+f(\mu-\delta_l)-2f(\mu)\ , 
$$
and consider the bounded operator $\tilde\Delta$ on $\ell^2(\Z^n;\bigwedge^j(\Z^n))$ defined by 
\begin{equation*}%\label{13sep22a}
\tilde\Delta\left(\sum\omega_{I}dx^I\right)\nonumber
=\sum (\Delta \omega_{I})dx^I\ .
\end{equation*}

The following result shows that the Hodge-Laplacian on $\Z^n$ is unitary equivalent to a family of usual Laplacians on $\Z^n$.

\begin{Lemma}\label{lemmaseccion3}
\begin{equation*}
(\tilde d+\tilde d^*)^2=\tilde \Delta
\end{equation*}
\end{Lemma}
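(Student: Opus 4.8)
The plan is to expand the square and kill the degree-preserving off-diagonal terms first, then reduce the surviving anticommutator to a scalar computation via the Clifford relation between exterior and interior multiplication. Writing $\tilde d+\tilde d^*$ as the sum of a degree-raising and a degree-lowering operator, one has
\[
(\tilde d+\tilde d^*)^2=\tilde d^2+(\tilde d^*)^2+\tilde d\,\tilde d^*+\tilde d^*\tilde d .
\]
I would dispose of the first two terms by transport of structure: from the commutation relation \cref{Udconmutagrande} and the unitarity of $U$ we have $\tilde d=U\,d\,U^*$, so $\tilde d^2=U\,d^2\,U^*=0$ by \cref{d20}, and taking adjoints gives $(\tilde d^*)^2=0$ as well. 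Hence $(\tilde d+\tilde d^*)^2=\tilde d\,\tilde d^*+\tilde d^*\tilde d$, the Hodge--Laplacian, and the whole content of the \namecref{lemmaseccion3} is to identify this with the diagonal operator $\tilde\Delta$.

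For the diagonal identification I would introduce the algebraic operators $\epsilon_l$ (left exterior multiplication by $dx^l$) and its adjoint $\iota_l$ (interior multiplication), both acting only on the fibre $\bigwedge^j(\Z^n)$, and rewrite \cref{defd} as $\tilde d=\sum_{l=1}^n \epsilon_l\,\mathcal D_l$. Since $\epsilon_l$ and $\mathcal D_l$ act on different tensor factors and $\mathcal D_l$ is a translation-type operator, adjoining gives $\tilde d^*=\sum_{l=1}^n \iota_l\,\mathcal D_l^*$, where $\mathcal D_l^*\omega(\mu)=\omega(\mu-\delta_l)-\omega(\mu)$. The key algebraic input is that the difference operators $\mathcal D_l,\mathcal D_k^*$ all commute with one another and with every $\epsilon_k,\iota_k$, together with the Clifford anticommutation relation $\epsilon_l\iota_k+\iota_k\epsilon_l$ which equals the identity when $l=k$ and vanishes otherwise. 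Combining these,
\[
\tilde d\,\tilde d^*+\tilde d^*\tilde d=\sum_{l,k}\big(\epsilon_l\iota_k+\iota_k\epsilon_l\big)\,\mathcal D_l\mathcal D_k^*=\sum_{l=1}^n \mathcal D_l\mathcal D_l^* ,
\]
so all cross terms $l\neq k$ cancel and the result is a scalar operator acting componentwise.

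The final step is the one-dimensional check $\mathcal D_l\mathcal D_l^*\omega(\mu)=2\omega(\mu)-\omega(\mu+\delta_l)-\omega(\mu-\delta_l)$, whence $\sum_l \mathcal D_l\mathcal D_l^*$ is exactly the componentwise discrete Laplacian and equals $\tilde\Delta$, the overall sign being fixed by the convention chosen for $\Delta$. I expect the main obstacle to be the sign and index bookkeeping in the second paragraph: one must justify that $\tilde d^*$ really is $\sum_l \iota_l\mathcal D_l^*$ (either directly against the inner product \cref{omegajinterno}, or by reconciling it with the Hodge-star expression \cref{22sep22}), and then verify the Clifford relation with the correct signs so that the off-diagonal $l\neq k$ contributions cancel cleanly rather than merely collapsing in appearance. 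Once that algebra is pinned down, the reduction to the scalar Laplacian is immediate.
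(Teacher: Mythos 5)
Your proof is correct, and its core mechanism genuinely differs from the paper's. Both arguments begin identically: the paper also disposes of $\tilde d^2$ (hence of $(\tilde d^*)^2$) through the unitary intertwining \cref{Udconmutagrande} together with \cref{d20}. For the remaining identity $\tilde d\tilde d^*+\tilde d^*\tilde d=\tilde\Delta$, however, the paper computes quadratic forms: using the Hodge-star formula \cref{22sep22} it writes $\langle\tilde d\tilde d^*\omega,\omega\rangle=\lVert\tilde d(*\omega)\rVert^2$ and $\langle\tilde d^*\tilde d\omega,\omega\rangle=\lVert\tilde d\omega\rVert^2$, expands both in coordinates, and checks that the two resulting sums of squares run over the complementary index sets $\alpha\in\{I(1),\dots,I(j)\}$ and $\alpha\notin\{I(1),\dots,I(j)\}$, so that their sum matches the form of $\tilde\Delta$ in \cref{23sep22}; boundedness, self-adjointness and polarization then upgrade this to the operator identity. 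Your route --- $\tilde d=\sum_l\epsilon_l\mathcal{D}_l$, $\tilde d^*=\sum_l\iota_l\mathcal{D}_l^*$, commuting difference operators, and $\epsilon_l\iota_k+\iota_k\epsilon_l=\delta_{lk}\,\mathrm{Id}$ --- is a direct operator computation that needs neither the Hodge star nor quadratic forms; checking $\iota_l=\epsilon_l^*$ against \cref{26sep22} is immediate since the $dx^I$, $I\in P^{j,n}_+$, are orthonormal, and the same check justifies $\tilde d^*=\sum_l\iota_l\mathcal{D}_l^*$ against \cref{omegajinterno}, so the verification you single out as the main obstacle does go through. Your approach buys transparency exactly where the paper's proof is delicate: in the displays \cref{23sep22b,23sep22c} the cross terms between pairs $(I,\alpha)\neq(I',\alpha')$ with $dx^\alpha\wedge dx^{J_I}=\pm\,dx^{\alpha'}\wedge dx^{J_{I'}}$ are silently dropped --- already for $n=2$, $j=1$ one has $\lVert\tilde d\omega\rVert^2=\sum_\mu|\mathcal{D}_1\omega_2-\mathcal{D}_2\omega_1|^2$ rather than the diagonal sum --- and these terms cancel only when the two forms are added; your Clifford identity performs precisely that cancellation explicitly, via the vanishing of the $l\neq k$ anticommutators, which is the point you anticipated. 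In exchange, the paper's computation reuses the Hodge star it has already set up and avoids introducing the creation/annihilation formalism. One caveat you rightly flag: $\sum_l\mathcal{D}_l\mathcal{D}_l^*$ is the \emph{positive} componentwise Laplacian, $2\omega(\mu)-\omega(\mu+\delta_l)-\omega(\mu-\delta_l)$ summed over $l$, whereas $\Delta$ as literally displayed in the paper carries the opposite sign; since $(\tilde d+\tilde d^*)^2\geq0$, and since both the quadratic form \cref{23sep22} and the later symbol $r_z(\xi)=\sum_l|a_{h,l}(\xi)|^2-z^2$ use the positive convention, your computation identifies the intended $\tilde\Delta$, and your remark on the sign convention is accurate.
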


\begin{proof} 

First notice that by \cref{Udconmutagrande} we know that $\tilde{d}^2=0$. It follows that $(\tilde d+\tilde d^*)^2=\tilde d\tilde d^*+\tilde d^*\tilde d$ and hence we will prove $(\tilde d\tilde d^*+\tilde d^*\tilde d)=\tilde \Delta$.  Consider the quadratic form of the operator $\tilde\Delta$ given by
\begin{align}\label{23sep22}
\langle \sum_I(\Delta \omega_I) dx^I;\sum_I\omega_Idx^I  \rangle=&\sum_{\mu\in \Z^n}\sum_I (\Delta \omega_I(\mu)) 
\overline{\omega_I(\mu) }\nonumber\\
=&\sum_{\mu\in \Z^n}\sum_I \sum_{\alpha=1}^n((\mathcal{D}_{\alpha}^2\omega_I)(\mu-e_\alpha))\overline{\omega_I(\mu)}\nonumber\\
=&\sum_{\mu\in \Z^n}\sum_I \sum_{\alpha=1}^n|  \mathcal{D}_{\alpha}\omega_I (\mu)|^2\ .
\end{align}

Now, let us compute the quadratic form of $\tilde d\tilde d^*+\tilde d^*\tilde d$. Using \cref{22sep22} we obtain that 
\begin{align}\label{23sep22b}
\langle (\tilde d\tilde d^*)\omega, \omega \rangle &= \langle \tilde d*\omega;\tilde d*\omega\rangle\nonumber\\
&=\langle \tilde d\sum_I{\rm sign}(IJ_I) \omega_Idx^J;\tilde d\sum_I{\rm sign}(IJ_I) \omega_Idx^{J_I}\rangle\nonumber\\
&=\langle \sum_I{\rm sign}(IJ_I) \sum_{\alpha=1}^n{  \mathcal D_{\alpha}\omega_I}dx^\alpha\wedge dx^{J_I};\sum_I{\rm sign}(IJ_I) \sum_{\alpha=1}^n{  \mathcal D_{\alpha}\omega_I}dx^\alpha\wedge dx^{J_I}\rangle\nonumber\\
&=\sum_{\mu\in \Z^n}\sum_I\sum_{\substack{\alpha\neq J_I(i)\\ j+1\leq i\leq n}}|   \mathcal D_{\alpha}\omega_I (\mu)|^2\nonumber\\
&=\sum_{\mu\in \Z^n}\sum_I\sum_{i=1}^j|  \mathcal{D}_{I(i)}\omega_I (\mu)|^2.
\end{align}

Similarly 

\begin{align}\label{23sep22c}
\langle (\tilde d^*\tilde d)\omega, \omega \rangle &= \langle \tilde d\omega;\tilde d\omega\rangle\nonumber\\
&=\langle \sum_I\sum_{\alpha=1}^n{  \mathcal D_{\alpha}\omega_I}dx^\alpha\wedge dx^J;\sum_I\sum_{\alpha=1}^n{  \mathcal D_{\alpha}\omega_I}dx^\alpha\wedge dx^J\rangle\nonumber\\
&=\sum_{\mu\in \Z^n}\sum_I\sum_{\substack{\alpha\neq I(i)\\ 1\leq i\leq j}}|    \mathcal D_{\alpha}\omega_I (\mu)|^2\nonumber\\
&=\sum_{\mu\in \Z^n}\sum_I\sum_{i=j+1}^n|  \mathcal{D}_{J_I(i)}\omega_I (\mu)|^2.
\end{align}

Adding \cref{23sep22b,23sep22c} we obtain \cref{23sep22}.
\end{proof}

This result is useful for the present article because it allows the following representation of the resolvent of $d+d^*$:
\begin{Corollary}\label{le00} For ${\rm Im}(z)\neq0$ the resolvent operators of $\tilde d+\tilde d^*$ and $\tilde\Delta$ satisfies 
$$(\tilde d+\tilde d^*-z)^{-1}=(\tilde d+\tilde d^*+z)(\tilde\Delta-z^2)^{-1}.$$
\end{Corollary}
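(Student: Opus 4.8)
The plan is to derive the resolvent identity directly from the algebraic factorization established in \cref{lemmaseccion3}. Set $D:=\tilde d+\tilde d^*$, a bounded self-adjoint operator on $\bigoplus_{j=0}^n\ell^2(\Z^n;\bigwedge^j(\Z^n))$, and recall from the lemma that $D^2=\tilde\Delta$. The guiding observation is the formal analogue of the scalar identity $\frac{1}{x-z}=\frac{x+z}{x^2-z^2}$: since $D$ commutes with itself, one expects $(D-z)^{-1}=(D+z)(D^2-z^2)^{-1}$ whenever both sides make sense. So the entire task reduces to checking that the right-hand side is a well-defined bounded operator and that it is genuinely the inverse of $D-z$.

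First I would verify invertibility of the two relevant operators. Since $\Im(z)\neq 0$ and $D$ is self-adjoint, $z\notin\sigma(D)$, so $(D-z)^{-1}$ exists and is bounded by the spectral theorem. For the factor $(\tilde\Delta-z^2)^{-1}$ I would argue that $z^2\notin\sigma(\tilde\Delta)$: because $\tilde\Delta=D^2\geq 0$, its spectrum lies in $[0,\infty)$, whereas $z^2$ has nonzero imaginary part $2\,\Re(z)\Im(z)$ unless $\Re(z)=0$, and in the remaining case $\Re(z)=0$ one has $z^2=-|\Im(z)|^2<0$; either way $z^2\notin[0,\infty)\supseteq\sigma(\tilde\Delta)$, so $(\tilde\Delta-z^2)^{-1}$ is bounded. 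Thus the proposed right-hand side $(D+z)(\tilde\Delta-z^2)^{-1}$ is a product of bounded operators, hence bounded.

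Next I would confirm it is the inverse by a one-line computation. Using $\tilde\Delta=D^2$ and the factorization $D^2-z^2=(D-z)(D+z)=(D+z)(D-z)$ (valid since $D$ commutes with the scalar $z$), I would compute
\begin{equation*}
(D-z)\,(D+z)(\tilde\Delta-z^2)^{-1}=(D^2-z^2)(D^2-z^2)^{-1}=\mathbb I,
\end{equation*}
and symmetrically $(D+z)(\tilde\Delta-z^2)^{-1}(D-z)=(D+z)(D-z)(D^2-z^2)^{-1}=\mathbb I$, where the reordering in the second identity is justified because $D+z$, $D-z$ and the functional-calculus operator $(D^2-z^2)^{-1}$ all commute (they are all functions of the single self-adjoint operator $D$). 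This exhibits $(D+z)(\tilde\Delta-z^2)^{-1}$ as a two-sided inverse of $D-z$, which therefore coincides with $(D-z)^{-1}$, giving the claimed formula.

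I do not expect a serious obstacle here; the result is essentially immediate once \cref{lemmaseccion3} is in hand. The only point demanding genuine care is the spectral bookkeeping in the second paragraph — confirming $z^2\notin\sigma(\tilde\Delta)$ across both the $\Re(z)\neq0$ and $\Re(z)=0$ cases so that $(\tilde\Delta-z^2)^{-1}$ actually exists. Everything else is the bounded-operator version of the scalar partial-fraction identity, and the mutual commutativity of all factors (as functions of $D$) makes the operator manipulations rigorous rather than merely formal.
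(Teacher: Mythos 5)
Your proposal is correct and matches the paper's intent exactly: the paper states this corollary without proof, treating it as an immediate consequence of \cref{lemmaseccion3} via the factorization $(D-z)(D+z)=D^2-z^2=\tilde\Delta-z^2$, which is precisely the argument you give. Your added spectral bookkeeping (checking $z^2\notin[0,\infty)\supseteq\sigma(\tilde\Delta)$ in both the $\Re(z)\neq0$ and $\Re(z)=0$ cases) is a sound and complete filling-in of the details the paper leaves implicit.
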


%----------------------------------------------------------------------------------------------------------------------------------------------------------------------------
%----------------------------------------------------------------------------------------------------------------------------------------------------------------------------
\section{Continuum limit}\label{sec:limit}
%----------------------------------------------------------------------------------------------------------------------------------------------------------------------------
%----------------------------------------------------------------------------------------------------------------------------------------------------------------------------

\subsection{Introducing the mesh size \texorpdfstring{$h$}{h}}
Now we are ready to introduce the parameter $h$. As in the last part of section \cref{sec:CDS} we consider the spaces $\bigoplus_{j=0}^n \ell^2( h\Z^n;\bigwedge^j(h \Z^n))$  and $ \bigoplus_{j=0}^n \ell^2\Big(h\Z^n;\C^{\binom{n}{j}}\Big)$ with internal products given respectively  by 
\begin{align*}
\langle \omega, \eta\rangle_{\ell^2( h\Z^n;\bigwedge^{j}(h \Z^n))}&=\frac1{h^{2j}}\sum_{\mu\in h\Z^n;I\in P^{j,n}_+}\omega_I(\mu)\overline{\eta}_I(\mu), \\
\langle f, g\rangle_{\ell^2\Big(h\Z^n;\C^{\binom{n}{j}}\Big)}&=\sum_{\mu\in h\Z^n;1\leq l\leq \binom{n}{j} }f_l(\mu)\overline{g}_l(\mu).
\end{align*}
To define a unitary operator between these spaces we introduce  the lexicographic order in $P_+^{j,n}$, which is given by: $I<I'$ if and only if there exists $1\leq l \leq j$ such that 
\begin{align*}
I(i)=&I'(i),\quad i<l,\\
I(l)<&I'(l).
\end{align*}
Therefore  we have  $P_+^{j,n}=\{I_1^j<\dots<I_{\binom{n}{j}}^j\}$. For  $0\leq j\leq n$, using \cref{14dec22}, we set 
\begin{equation*}%\label{Ujdef}
\tilde U_{j,h}:\ell^2(h\Z^n;\bigwedge\nolimits^j(h\Z^n))\to  \ell^2\Big(h\Z^n;\C^{\binom{n}{j}}\Big) \ ,
\end{equation*} 
by
\begin{equation}\label{14dec22c}
(\tilde U_{j,h} \omega)_{l}(\mu):=\frac1{h^{{j}}}\omega_{I^j_l}(\mu),\qquad 1\leq l \leq \binom{n}{j}.
\end{equation} 
This gives a unitary operator whose adjoint is given by
\begin{equation}\label{tilde}
(\tilde{U}_{j,h}^* f)(\mu):=h^{j}\sum_{l=1}^{\binom{n}{j}} f_l(\mu)dx^{I^j_l}
\end{equation}

Using the same definitions of \cref{d_h,15sep22,defd} together with \cref{14dec22b,14dec22c} give us the comutative diagram
\begin{figure}[h]
\centering
\begin{tikzcd}
\ell^2(hX^j) \arrow[r, "{{U}_j}"] \arrow[d, "d_{j,h}"]
& \ell^2(h\mathbb{Z}^n;\bigwedge^j (h\mathbb{Z}^n)) \arrow[r, "\tilde {U}_{j,h}"] \arrow[d, "{\tilde{d}_{j,h}}"]
& \ell^2(h\mathbb{Z}^n;\mathbb{C}^{\binom{n}{j}}) \arrow[d, "{\tilde U_{j+1,h} \tilde{d}_{j,h}  \tilde {U}_{j,h}^{*}}"] \\
\ell^2(hX^{j+1}) \arrow[r, "{{U}_{j+1}}"]
& \ell^2(h\mathbb{Z}^n;\bigwedge^{j+1} (h\mathbb{Z}^n)) \arrow[r, "\tilde U_{j+1,h}"]
&\ell^2(h\mathbb{Z}^n;\mathbb{C}^{\binom{n}{j+1}})
\end{tikzcd}
\end{figure}

Defining the diagonal operator $\tilde U_h:\bigoplus_{j=0}^n \tilde U_{j,h}$ %\ell^2( \Z^n;\bigwedge^j(\Z^n))\to \bigoplus_{j=0}^n \ell^2\Big(\Z^n;\C^{\binom{n}{j}}\Big)$, 
we are led to study the convergence of the operator $$H_h:=\tilde U_h U(d_h+d_h^*)U^{*} \tilde U_h^{*}: \bigoplus_{j=0}^n \ell^2\Big(h\Z^n;\C^{\binom{n}{j}}\Big)\to \bigoplus_{j=0}^n \ell^2\Big(h\Z^n;\C^{\binom{n}{j}}\Big)\ .$$ This operator can be written in the operator-valued matrix form
\begin{equation*}%\label{24dec22}
H_h=\tilde {U}_h U \begin{pmatrix}
0 & d_{0,h}^*& 0&\hdots&0&0&0\\
d_{0,h} & 0 & d_{1,h}^*&\hdots&0&0&0\\
0 &  d_{1,h} & 0&\hdots&0&0&0\\
\vdots&\vdots&\vdots&\ddots&\vdots&\vdots&\vdots\\
0&0&0&\cdots&0&d_{j-1,h}^*&0\\
0&0&0&\cdots&d_{j-1,h}&0&d_{j,h}^*\\
0&0&0&\cdots&0&d_{j,h}&0
\end{pmatrix}U^{*} \tilde{U}_h^{*}.
\end{equation*}

Now we turn to construct the passage to the continuum. For this, we denote an element in $\bigoplus_{j=0}^n \ell^2\Big(h\Z^n;\C^{\binom{n}{j}}\Big)$ by $f_{j,l}$ for  $1\leq j\leq n; 1\leq l\leq \binom{n}{j}$ and set 
 $
F:\bigoplus_{j=0}^n \ell^2\Big(h\Z^n;\C^{\binom{n}{j}}\Big)\to \bigoplus_{j=0}^nL^2\Big(h^{-1}\T^n;\C^{\binom{n}{j}}\Big)$ 
 \begin{equation*} 
 (F f)_{j,l}(\xi):=h^n\sum_{\mu\in h\Z^n}e^{-2\pi i \xi\cdot\mu}f_{j,l}(\mu)\ , \qquad 0\leq j\leq n,\quad 1\leq l\leq \binom{n}j.
 \end{equation*} 

The next lemma describe $F H_h F^*$ as matrix-value multiplication operator. This characterization will enable us to adapt use the results from \cite{NT21} to obtain the desired limit.

\begin{Lemma}\label{coeficientesHh}The operator $F H_h F^*$ can be  written as a $2^n\times2^n$- matrix   whose coefficients are linear combinations of elements of the form: 
\begin{align*}
a_{h,l}(\xi):=\frac{(-1+e^{-2\pi ih\xi_l})}h,\quad 1\leq l \leq n.
\end{align*}
\end{Lemma}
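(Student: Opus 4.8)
The plan is to compute $FH_hF^*$ explicitly by conjugating each entry of the operator-valued matrix for $H_h$ through the composition of unitaries $\tilde U_h U$ followed by the Fourier transform $F$. Since $H_h$ is built from the blocks $d_{j,h}$ and their adjoints $d_{j,h}^*$ sandwiched between the diagonal unitaries, it suffices to understand how a single off-diagonal block acts after conjugation. I would first pass to the differential-forms picture using \cref{Udconmutagrande}, which tells us that $U(d_h+d_h^*)U^* = \tilde d_h + \tilde d_h^*$, so that $H_h = \tilde U_h(\tilde d_h+\tilde d_h^*)\tilde U_h^*$. The point of the intermediate space is that $\tilde d_{j,h}$ acts on coefficient functions $\omega_I$ simply through the finite differences $\mathcal{D}_l$ rescaled by $1/h$ (as in \cref{15sep22} and \cref{defd}), together with a combinatorial wedge-product bookkeeping that only permutes indices and inserts signs $\mathrm{sign}(IJ_I)$.

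First I would reduce everything to the scalar operators $\mathcal{D}_{l,h}$, the rescaled finite-difference operators $\mathcal{D}_{l,h}\omega(\mu) = h^{-1}(\omega(\mu+h\delta_l)-\omega(\mu))$, which arise from $\tilde d_{j,h}$ after accounting for the $h^{-j}$ and $h^{j}$ factors in \cref{14dec22c} and \cref{tilde}. The crucial observation is that the rescaling in $\tilde U_{j,h}$ is chosen precisely so that passing from dimension $j$ to $j+1$ produces exactly one extra factor of $h^{-1}$, converting a raw difference into the rescaled difference $\mathcal{D}_{l,h}$. Next I would apply the Fourier transform $F$: a standard computation shows that $F$ diagonalizes each $\mathcal{D}_{l,h}$ into multiplication by $a_{h,l}(\xi) = (-1+e^{-2\pi i h\xi_l})/h$, and its adjoint into multiplication by $\overline{a_{h,l}(\xi)}$. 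Because $\tilde U_h$ and $U$ are diagonal (or index-permuting) and the wedge product only contributes constant signs, conjugation by $F$ turns each block $\tilde d_{j,h}$ and $\tilde d_{j,h}^*$ into a matrix whose entries are $\pm a_{h,l}(\xi)$ or $\pm\overline{a_{h,l}(\xi)}$ for appropriate $l$. Assembling the $2^n = \sum_j \binom{n}{j}$ rows and columns then yields the claimed $2^n\times 2^n$ matrix with entries that are linear combinations of the $a_{h,l}$.

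The routine but bookkeeping-heavy part is tracking the signs $\mathrm{sign}(IJ_I)$ and the index which difference operator $\mathcal{D}_{l,h}$ attaches to each matrix entry; this is exactly the combinatorics already organized in the proof of \cref{lemmaseccion3}, where the wedge $dx^\alpha\wedge dx^{J_I}$ selects precisely the indices $\alpha = I(i)$ (for $\tilde d\tilde d^*$) or $\alpha = J_I(i)$ (for $\tilde d^*\tilde d$). I expect the main obstacle to be purely organizational rather than conceptual: one must verify that the Fourier symbol of $\tilde d_{j,h}$ in the chosen lexicographic basis of $P^{j,n}_+$ really is a matrix of the stated form, i.e.\ that every entry is an $\R$-linear (indeed $\{-1,0,1\}$-coefficient up to conjugation) combination of the $n$ functions $a_{h,l}$ and their conjugates, with no spurious cross terms. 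Once the action of a single $\tilde d_{j,h}$ is pinned down, the off-diagonal block structure of $H_h$ guarantees that $FH_hF^*$ inherits the same symbol structure, completing the proof.
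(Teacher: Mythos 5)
Your proposal is correct and follows essentially the same route as the paper's proof: the paper likewise computes $\tilde U_{j+1,h}\tilde d_{j,h}\tilde U_{j,h}^*$ entrywise, uses the cancellation of the $h^{j}$ factor from $\tilde U_{j,h}^*$ against the $h^{-(j+1)}$ from $\tilde U_{j+1,h}$ to produce the rescaled differences, identifies the $(\tilde l,l)$ entries through the wedge bookkeeping $dx^\alpha\wedge dx^{I^j_l}=\pm dx^{I^{j+1}_{\tilde l}}$, and concludes via $F(f_{j,l}(\cdot+h\delta_\alpha)-f_{j,l}(\cdot))=h\,a_{h,\alpha}Ff_{j,l}$. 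The only cosmetic difference is that you spell out the adjoint blocks carrying the conjugate symbols $\overline{a_{h,l}}$, a point the paper leaves implicit by treating only the $\tilde d_{j,h}$ blocks.
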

\begin{proof}
 For  fixed $0\leq j\leq n$ let us consider the operator  $\tilde d_j\tilde U_{j,h}^*$ applied to $f=(f_{j,l})\in \ell^2\Big(h\Z^n;\C^{\binom{n}{j}}\Big)$.  Using \cref{defd,tilde}
\begin{align*}
(\tilde d_{j,h}\tilde U_{j,h}^* f)(\mu)&=h^{j}\sum_l^{\binom{n}{j}}   (\tilde d_{0,h} f_{j,l})(\mu)\wedge dx^{I^j_l}\\
&=h^{j}\sum_l^{\binom{n}{j}}  \sum_{\alpha=1}^n  {(f_{j,l}(\mu+h\delta_\alpha)-f_{j,l}(\mu))}dx^\alpha\wedge dx^{I^j_l}\\
&=h^{j}\sum_{1\leq \tilde  l\leq {\binom{n}{j}}}  \left(\sum_{\substack{
\alpha,l \\  dx^\alpha\wedge dx^{I^j_l}=(\pm)dx^{ I^{j+1}_{\tilde l}}
}}  {(f_{j,l}(\mu+h\delta_\alpha)-f_{j,l}(\mu))}\right) (\pm)dx^{ I^{j+1}_{\tilde l}}.
\end{align*}
 From this computation and \cref{14dec22c} we see that the $(\tilde l,l)$ coefficient of the $\binom{n}{j+1}\times \binom{n}{j}$ operator matrix  $\tilde U_{j+1,h}\tilde d_j\tilde U_{j,h}^*$ is 
 $$
\sum_{\substack{
1\leq \alpha\leq n \\  dx^\alpha\wedge dx^{I^j_l}=(\pm)dx^{ I^{j+1}_{\tilde l}}
}}  \frac{(f_{j,l}(\mu+h\delta_\alpha)-f_{j,l}(\mu))}h.
 $$
 We finish the proof  by recalling that $F(f_{j,l}(\cdot+h\delta_\alpha)-f_{j,l}(\cdot))=h\,a_{h,\alpha}Ff_{j,l}$.
\end{proof}

Our problem is  to study the convergence of $(F H_h F^*-z)^{-1}$. Using \cref{le00} we immediately  see   that for  ${\rm Im}(z)>0$
 \begin{align}\label{eq:resolventefibrado1}
(F H_h F^*-z)^{-1}(\xi)
&=\frac{1}{r_z(\xi)}
FH_hF^*+\frac{z}{r_z(\xi)}
\end{align}
 where
 \begin{equation*}%\label{eq:characteristic}
r_z(\xi):=\sum_{l=1}^n|a_{h,l}(\xi)|^2-z^2.
\end{equation*}
 
\subsection{The limit operator and some auxiliary results}\label{limitoperator}
For the limit operator we will consider the operator exterior derivative $\d$ in $\Omega(\R^n)$. Then, we define the operator $H$ in $\bigoplus_{j=0}^nL^2\Big(\R^n;\C^{\binom{n}{j}}\Big)$ with domain
 $\bigoplus_{j=0}^n\mathcal H^1\Big(\R^n;\C^{\binom{n}{j}}\Big)$, the first order Sobolev space, and is given by the operator-valued matrix 
 \begin{equation*}
H=\begin{pmatrix}
0 & \d_0^*& 0&\hdots&0&0&0\\
\d_0 & 0 & \d_1^*&\hdots&0&0&0\\
0 &  \d_1 & 0&\hdots&0&0&0\\
\vdots&\vdots&\vdots&\ddots&\vdots&\vdots&\vdots\\
0&0&0&\cdots&0&\d_{j-1}^*&0\\
0&0&0&\cdots&\d_{j-1}&0&\d_j^*\\
0&0&0&\cdots&0&\d_j&0
\end{pmatrix}
\end{equation*}

Defining the Fourier transform $\F:\bigoplus_{j=0}^nL^2\Big(\R^n;\C^{\binom{n}{j}}\Big)\to \bigoplus_{j=0}^nL^2\Big(\R^n;\C^{\binom{n}{j}}\Big)$ by 
$$
(\F f)_{j,l}(\xi)=\int_{\R^n}e^{-2\pi i x\cdot\xi}f_{j,l}(x)dx, \quad 1\leq j\leq n; 1\leq l \leq \binom{n}j, 
$$
 in a completely analogous manner as we did before, we can  see  that for ${\rm Im}(z)>0$
 \begin{equation}\label{eq:resolventefibrado2}
(\mathcal{F} H \mathcal{F}^*-z)^{-1}(\xi)=\frac{1}{R_z(\xi)}\mathcal{F} H \mathcal{F}^*+\frac{z}{R_z(\xi)}
\end{equation}
 where $
R_z(\xi):=4\pi^2|\xi|^2-z^2$. This time the coefficients of the matrix of $\mathcal{F} H \mathcal{F}^*$ are obtained by replacing $a_{h,l}$ of \cref{coeficientesHh} by
\begin{equation*}
A_l(\xi):=2i\pi \xi_l\quad 1\leq l \leq n \ .
\end{equation*}

Next, using the construction from \cite{NT21} define the partial isometry $P_h:\bigoplus_{j=1}^n \Big(L^2(\R^n);\C^{\binom{n}{j}}\Big)\to \bigoplus_{j=1}^n \Big(\ell^2(h\Z^n);\C^{\binom{n}{j}}\Big)$ 
$$
(P_hu)(\mu)=h^{-2}\int_{\R^n} \overline{\varphi_{h,\mu}(x)}u(x)dx,
$$
where $\varphi_{h,\mu}(x):=\varphi(h^{-1}(x-\mu))$, and $\varphi$ is any function in $\mathcal{S}(\R^n)$ satisfying
\begin{align}\label{condi_phi}
&\sum_{\mu\in\Z^n}|\hat{\varphi}(\xi+\mu)|^2=1,\quad \xi\in\R^n,\\\label{condi_phi2}
&{\rm supp}(\hat{\varphi})\subset (-1,1)^n.
\end{align}

Set also $Q_h:=F P_h\F^*$.  

The following two lemmas are small modifications of \cite[Lemma 2.2, Lemma 2.3]{NT21}. For ease of reading, we include the main ideas of the proofs here.
\begin{Lemma}\label{le0} For each $z\in\C\setminus \R$, there exists a positive constant $C$ such that 
$$
\|(1-P_h^*P_h)(H-z)^{-1}\|\leq C h.
$$
\end{Lemma}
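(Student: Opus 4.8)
The plan is to factorise the estimate through the first order Sobolev space, using that $P_h$ is a partial isometry. Since $P_h$ is a partial isometry, $\Pi_h:=P_h^*P_h$ is an orthogonal projection on $\bigoplus_{j=0}^n L^2(\R^n;\C^{\binom{n}{j}})$, so $1-\Pi_h$ is again an orthogonal projection and in particular has norm at most $1$. Moreover $P_h$ acts diagonally, operating on each of the $\binom{n}{j}$ scalar slots of the $j$-th summand through one and the same scalar operator $\pi_h$, so that $1-\Pi_h$ equals $(1-\pi_h^*\pi_h)\otimes\mathrm{Id}$ on each block and its norm $\mathcal H^1\to L^2$ coincides with the scalar one, uniformly in $j$. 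Factorising $(1-\Pi_h)(H-z)^{-1}$ through $\mathcal H^1$, the bound reduces to the two statements
\[
\|(H-z)^{-1}\|_{L^2\to\mathcal H^1}\le C_z,\qquad \|(1-\Pi_h)u\|_{L^2}\le C h\,\|u\|_{\mathcal H^1}.
\]

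For the first estimate I would read off the smoothing from the diagonalisation \cref{eq:resolventefibrado2}: on the Fourier side $(\mathcal F H\mathcal F^*-z)^{-1}(\xi)=R_z(\xi)^{-1}\bigl(\mathcal F H\mathcal F^*(\xi)+z\bigr)$, where $R_z(\xi)=4\pi^2|\xi|^2-z^2$ and the matrix $\mathcal F H\mathcal F^*(\xi)$ has entries assembled from $A_l(\xi)=2i\pi\xi_l$, hence $\|\mathcal F H\mathcal F^*(\xi)\|\le C|\xi|$. For $\operatorname{Im}(z)\neq0$ one checks $|R_z(\xi)|\ge c_z(1+|\xi|^2)$, so $\sup_\xi(1+|\xi|)\,\|(\mathcal F H\mathcal F^*-z)^{-1}(\xi)\|\le C_z$. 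Since the $\mathcal H^1$-norm is comparable to $\|(1+|\xi|)\,\widehat{\,\cdot\,}\|_{L^2}$ and a decomposable operator has norm equal to the supremum of its fibre norms, this is exactly the claimed resolvent bound.

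The second estimate is the heart of the matter and is where the adaptation of \cite[Lemma 2.2]{NT21} lives. Here $\|(1-\pi_h^*\pi_h)u\|_{L^2}=\operatorname{dist}_{L^2}(u,V_h)$, with $V_h$ the scale-$h$ shift-invariant space generated by $\varphi$. Condition \cref{condi_phi} says precisely that $e_\mu:=h^{-n/2}\varphi(h^{-1}(\cdot-\mu))$, $\mu\in h\Z^n$, is orthonormal, so $\|(1-\pi_h^*\pi_h)u\|^2=\|u\|^2-\sum_\mu|\langle u,e_\mu\rangle|^2$. Expressing $\langle u,e_\mu\rangle$ through $\hat u$ and periodising over a fundamental domain of $h^{-1}\Z^n$ turns this into an integral of $\sum_m|a_m|^2-|\sum_m a_m\overline{b_m}|^2$, where $b_m=\hat\varphi(\,\cdot+m)$ satisfies $\sum_m|b_m|^2=1$ by \cref{condi_phi} and the support condition \cref{condi_phi2} keeps the aliasing sum finite. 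Applying Lagrange's identity together with the elementary bound $|a_mb_{m'}-a_{m'}b_m|^2\le 2(|a_m|^2|b_{m'}|^2+|a_{m'}|^2|b_m|^2)$ and unfolding the periodisation collapses the whole expression to
\[
\|(1-\pi_h^*\pi_h)u\|_{L^2}^2\le C\int_{\R^n}\bigl(1-|\hat\varphi(h\xi)|^2\bigr)\,|\hat u(\xi)|^2\,d\xi.
\]
Finally, \cref{condi_phi,condi_phi2} give $|\hat\varphi|\le1$ with $|\hat\varphi(0)|=1$ (because $\hat\varphi(\mu)=0$ for $\mu\in\Z^n\setminus\{0\}$ by the support condition), so $|\hat\varphi|^2$ is smooth with an interior maximum at the origin; its gradient there vanishes, whence $1-|\hat\varphi(\zeta)|^2\le C|\zeta|^2$ and thus $1-|\hat\varphi(h\xi)|^2\le Ch^2|\xi|^2$. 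The integral is then bounded by $Ch^2\||\xi|\hat u\|_{L^2}^2\le Ch^2\|u\|_{\mathcal H^1}^2$, which is the second estimate; combined with the first it yields $\|(1-\Pi_h)(H-z)^{-1}\|\le C_z h$.

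I expect the main obstacle to be the bookkeeping in this last step: organising the periodisation so that, after the Cauchy--Schwarz bound, only the single scalar weight $1-|\hat\varphi(h\xi)|^2$ survives, controlling the finitely many surviving aliasing cross-terms via \cref{condi_phi2}, and verifying the quadratic vanishing of that weight at $\xi=0$. This quadratic vanishing is precisely what turns the one derivative gained by the Dirac resolvent into a single power of $h$, so that \cref{condi_phi,condi_phi2}---which encode first order approximation by $V_h$---are exactly matched to the first order operator $H$; this is the only place where the argument departs from that of \cite{NT21}.
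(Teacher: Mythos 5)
Your proposal is correct, but it takes a genuinely different route from the paper's. The paper, following \cite{NT21}, argues entrywise on the Fourier side: each entry of $(\mathcal{F}H\mathcal{F}^*-z)^{-1}$ is a combination of $zR_z^{-1}$ and $A_lR_z^{-1}$, and the explicit aliasing expansion of $1-Q_h^*Q_h$ splits the error into the diagonal term $(1-|\hat\varphi(h\xi)|^2)g(\xi)$ plus finitely many cross terms $\hat{\varphi}(h\xi)\overline{\hat{\varphi}(h\xi+\mu)}g(\xi+h^{-1}\mu)$; the conditions \cref{condi_phi,condi_phi2} force these coefficients to vanish for $|h\xi|\leq\delta$ (resp.\ $|h\xi+\mu|\leq\delta$), so everything is localized at frequencies $|\xi|\gtrsim h^{-1}$, where the first-order decay $|A_l(\xi)|/|R_z(\xi)|\leq C/|\xi|\leq Ch$ finishes the proof. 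You instead establish the operator estimate $\|(1-P_h^*P_h)u\|_{L^2}\leq Ch\|u\|_{\mathcal H^1}$ --- a first-order Jackson-type bound for the $h$-scaled shift-invariant space generated by $\varphi$ --- and pair it with the uniform resolvent bound $\|(H-z)^{-1}\|_{L^2\to\mathcal H^1}\leq C_z$ read off from \cref{eq:resolventefibrado2}; your key computation is sound, since with $\sum_m|b_m|^2=1$ one indeed has $\sum_m|a_m|^2-|\sum_m a_m\overline{b_m}|^2=\tfrac12\sum_{m,m'}|a_mb_{m'}-a_{m'}b_m|^2\leq 2\sum_m|a_m|^2(1-|b_m|^2)$, which collapses all aliasing terms at once into the single weight $1-|\hat\varphi(h\xi)|^2$. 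Two small remarks: under \cref{condi_phi,condi_phi2} one actually has $|\hat\varphi|\equiv1$ on a neighbourhood of $0$ (for $|\zeta|\leq\delta$ the translates $\zeta+\mu$, $\mu\in\Z^n\setminus\{0\}$, fall outside $\supp\hat\varphi$, which is a compact subset of the open cube), so your quadratic bound $1-|\hat\varphi(\zeta)|^2\leq C|\zeta|^2$ holds trivially and the interior-maximum/vanishing-gradient argument is not needed; and \cref{condi_phi2} is not really what ``keeps the aliasing sum finite'' (Schwartz decay already ensures convergence) --- its essential role, visible in the paper's proof, is precisely this low-frequency vanishing. As for what each approach buys: the paper's argument stays at the level of matrix entries and reuses the appendix identities of \cite{NT21} almost verbatim, while yours is more modular --- the projection estimate is a statement purely about $\varphi$, the resolvent estimate purely about $H$, and the same factorisation through $\mathcal H^1$ applies unchanged to any constant-coefficient system whose resolvent gains one derivative, making transparent why first-order approximation by the shift-invariant space is exactly matched to the first-order operator $H$.
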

\begin{proof}
Since $(1-P_h^*P_h)(H-z)^{-1}=\mathcal{F}^*[1-Q_h^*Q_h](\mathcal{F}H\mathcal{F}^*-z)^{-1}\mathcal{F}$, is enough to prove the result for each entry in the difference matrix of $[1-Q_h^*Q_h](\mathcal{F}H\mathcal{F}^*-z)^{-1}$.  By \cref{eq:resolventefibrado2,coeficientesHh} each entry of $(\mathcal{F}H\mathcal{F}^*-z)^{-1}$ is a linear combination of elements in $\{zR_z(\xi)^{-1}, A_1(\xi)R_z(\xi)^{-1},\dots,A_n(\xi)R_z(\xi)^{-1}\}$. We will just study $A_l(\xi)R_z(\xi)^{-1}$ for a fixed $l$ since the proof for $zR_z(\xi)^{-1}$ is similar. For $\psi\in L^2(\R^n)$ set $g(\xi)=\dfrac{A_l(\xi)}{|2\pi \xi|^2-z}\psi(\xi)$. It is possible to show that (see the Appendix in \cite{NT21})
\begin{align*}
\left(1-Q_h^*Q_h\right)g(\xi)&=g(\xi)-\sum_{\mu\in\{0,\pm1\}^n} \hat{\varphi}(h\xi)\overline{\hat{\varphi}(h\xi+\mu)}g(\xi+h^{-1}\mu)\\
&=(1-|\hat\varphi(h\xi)|^2)g(\xi)-\sum_{0\neq\mu\in\{0,\pm1\}^n} \hat{\varphi}(h\xi)\overline{\hat{\varphi}(h\xi+\mu)}g(\xi+h^{-1}\mu).
\end{align*}
For the first term, by \cref{condi_phi,condi_phi2}
$$
\|(1-|\hat\varphi(h\xi)|^2)g\|\leq \sup_{|\xi|>h^{-1}\delta} \left|\frac{A_l(\xi)}{|2\pi \xi|^2-z}\right| \|\psi\|\leq C h \|\psi\|,
$$
for some $\delta >0$. For the terms in the sum  using \cref{condi_phi,condi_phi2} again we se that $\hat{\varphi}(h\xi)\overline{\hat{\varphi}(h\xi+\mu)}=0$ for $|h\xi+\mu|\leq \delta$.  Thus  for each $\mu\in\{0,\pm1\}^n$
$$
\|\hat{\varphi}(h\xi)\overline{\hat{\varphi}(h\xi+\mu)}g(\xi+h^{-1}\mu)\|\leq \sup_{|\xi+h^{-1}\mu|>\delta} \left|\frac{A_l(\xi+h^{-1}\mu)}{|2\pi (\xi+h^{-1}\mu)|^2-z}\right| \|\psi\|\leq C h \|\psi\|.$$
\end{proof}

\begin{Lemma}\label{le1} For each $z\in\C\setminus \R$, there exists a positive constant $C$ such that 
$$
\|(H_h-z)^{-1}P_h-P_h(H-z)^{-1}\|\leq C h
$$
\end{Lemma}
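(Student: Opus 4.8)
The plan is to follow the strategy of \cite[Lemma 2.3]{NT21}, exploiting the fiber/multiplier structure that the Fourier transforms $F$ and $\mathcal F$ provide. The key algebraic identity is the standard resolvent-type manipulation
\begin{equation*}
(H_h-z)^{-1}P_h-P_h(H-z)^{-1}=(H_h-z)^{-1}\bigl[P_h H-H_h P_h\bigr](H-z)^{-1}\ ,
\end{equation*}
which holds on the appropriate domains. Since $\|(H_h-z)^{-1}\|\leq|\Im(z)|^{-1}$ uniformly in $h$, and similarly for $(H-z)^{-1}$, the whole estimate reduces to controlling the \emph{intertwining defect} $P_h H-H_h P_h$ when composed with $(H-z)^{-1}$; more precisely it suffices to bound $\|(P_h H-H_h P_h)(H-z)^{-1}\|\leq Ch$. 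First I would conjugate everything by the Fourier transforms, writing $Q_h=FP_h\mathcal F^*$ and passing to $\|(Q_h\,\mathcal F H\mathcal F^*-F H_h F^*\,Q_h)(\mathcal F H\mathcal F^*-z)^{-1}\|$, so that all operators become multiplication (matrix-valued multiplier) operators acting fiberwise in $\xi$.

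Next I would use the explicit fiber representations already established: by \cref{coeficientesHh} the entries of $F H_h F^*$ are linear combinations of the symbols $a_{h,l}(\xi)=(-1+e^{-2\pi ih\xi_l})/h$, while the entries of $\mathcal F H\mathcal F^*$ are the corresponding linear combinations of $A_l(\xi)=2i\pi\xi_l$, with identical combinatorial coefficients (the same $\pm$ signs dictated by which $dx^\alpha\wedge dx^{I^j_l}$ equals $\pm dx^{I^{j+1}_{\tilde l}}$). Since $P_h$ and hence $Q_h$ act entrywise through the multiplier built from $\hat\varphi$ exactly as in \cite{NT21}, the matrix structure is respected and it is enough to estimate a single scalar entry. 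Thus I reduce to showing, for $g(\xi)=A_l(\xi)R_z(\xi)^{-1}\psi(\xi)$ (and separately for the $zR_z(\xi)^{-1}$ term), that
\begin{equation*}
\Bigl\|\bigl(a_{h,l}(\xi)\,Q_h-Q_h\,A_l(\xi)\bigr)g\Bigr\|\leq Ch\|\psi\|\ .
\end{equation*}
Writing out $Q_h$ as in the proof of \cref{le0}, the difference splits into a diagonal term carrying the factor $\bigl(a_{h,l}(\xi)-A_l(\xi)\bigr)|\hat\varphi(h\xi)|^2$ plus the off-diagonal $\mu\neq0$ terms. The diagonal term is controlled by the pointwise Taylor estimate $|a_{h,l}(\xi)-A_l(\xi)|\leq Ch|\xi|^2$ valid on $\supp\hat\varphi(h\,\cdot)\subset\{|h\xi|<1\}$, combined with the decay $|\xi|^2 R_z(\xi)^{-1}\in L^\infty$; this produces the desired $O(h)$ bound. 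The off-diagonal terms are handled precisely as in \cref{le0}: the support condition \cref{condi_phi2} forces $|\xi+h^{-1}\mu|>h^{-1}\delta$, so the argument is large and the factor $A_l/R_z$ (or $a_{h,l}/r_z$) is $O(h)$ there.

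The main obstacle I anticipate is bookkeeping rather than analysis: one must verify that $P_h$ genuinely commutes with the matrix index structure so that the defect $P_hH-H_hP_h$ really decouples into scalar fiber estimates of the form above, and that the symbols $a_{h,l}$ and $A_l$ enter the two matrices with matching coefficients (up to the uniformly bounded $\pm$ signs from the wedge products). Once this compatibility is confirmed, the quantitative heart is the elementary inequality $|a_{h,l}(\xi)-A_l(\xi)|=|(-1+e^{-2\pi ih\xi_l})/h-2i\pi\xi_l|\leq Ch\xi_l^2$, together with the observation that the denominators $R_z(\xi)=4\pi^2|\xi|^2-z^2$ and $r_z(\xi)=\sum_l|a_{h,l}(\xi)|^2-z^2$ are comparable and bounded below away from zero for $\Im(z)\neq0$. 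Assembling the diagonal $O(h)$ contribution with the off-diagonal large-argument contributions, and summing over the finitely many matrix entries and over $\mu\in\{0,\pm1\}^n$, yields the claimed bound $Ch$, completing the proof.
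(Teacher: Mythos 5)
There is a genuine gap, and it sits precisely at the step your ``main obstacle'' paragraph treats as mere bookkeeping. Your commutator identity $(H_h-z)^{-1}P_h-P_h(H-z)^{-1}=(H_h-z)^{-1}\bigl[P_hH-H_hP_h\bigr](H-z)^{-1}$ is algebraically fine, but the claimed reduction $\|(P_hH-H_hP_h)(H-z)^{-1}\|\leq Ch$ is false, because $(H-z)^{-1}$ is the resolvent of a \emph{first-order} operator: by \cref{eq:resolventefibrado2} its fiber entries are linear combinations of $A_{l'}(\xi)R_z(\xi)^{-1}$ and $zR_z(\xi)^{-1}$, and the former decay only like $|\xi|^{-1}$. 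Your diagonal estimate bounds $|(a_{h,l}-A_l)/R_z|\leq Ch$ but forgets the factor $A_{l'}\sim|\xi|$ already sitting inside $g$: the actual diagonal term is of pointwise size $Ch|\xi|^{2}\cdot|\xi|/|R_z|\sim Ch|\xi|$, which on the support of $\hat\varphi(h\,\cdot)$ (where $|\xi|$ ranges up to $\sim h^{-1}$) is only $O(1)$. Concretely, at $h\xi_l=1/2$ one has $|a_{h,l}(\xi)-A_l(\xi)|\sim h^{-1}$ while $|A_l(\xi)R_z(\xi)^{-1}|\sim h$, so the product is of order one, and $\hat\varphi(h\xi)$ need not vanish there; testing on $\psi$ concentrated at such frequencies shows the defect composed with $(H-z)^{-1}$ has norm bounded below uniformly in $h$. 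The same loss occurs in your off-diagonal terms, where you quote the $O(h)$ smallness of $A_l/R_z$ at the shifted argument but omit the factor $a_{h,l}(\xi)-A_l(\xi+h^{-1}\mu)$, of size $h^{-1}$, that multiplies it. This is not an accidental slip: it is exactly the first-order phenomenon (related to fermion doubling, at quasimomenta $|\xi|\sim h^{-1}$) that makes naive discrete Dirac operators converge only in the strong resolvent sense, as recalled in the introduction.

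The paper's proof avoids commuting $P_h$ through a bare first-order resolvent altogether. Using the supersymmetric identity of \cref{le00}, $(\tilde d+\tilde d^*-z)^{-1}=(\tilde d+\tilde d^*+z)(\tilde\Delta-z^2)^{-1}$, both resolvents are written as multiplier matrices with \emph{second-order} denominators, as in \cref{eq:resolventefibrado1,eq:resolventefibrado2}, and the resolvent symbols are compared directly via the splitting $\mathcal B_h=\frac{a_{h,l}}{r_z}-\frac{A_l}{R_z}=a_{h,l}\frac{R_z-r_z}{r_zR_z}+\frac{a_{h,l}-A_l}{R_z}$ in \cref{NT1}. Now two powers of decay are available where you had one: $|a_{h,l}-A_l|/|R_z|\leq Ch|\xi|^2/(1+|\xi|^2)\leq Ch$, and $|a_{h,l}|\,|R_z-r_z|/|r_zR_z|\leq C|\xi|\,h^2|\xi|^4/(1+|\xi|^2)^2\leq Ch^2|\xi|\leq Ch$ on $\{|h\xi|_\infty\leq 1-\delta\}$, where one also has the lower bound $|r_z|\geq c(1+|\xi|^2)$. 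So the repair of your argument is to abandon the intertwining-defect reduction and estimate $Q_h^*\frac{a_{h,l}}{r_z}Q_h-Q_h^*Q_h\frac{A_l}{R_z}$ entrywise, as the paper does. (Two minor points: your comparability claim for $r_z$ and $R_z$ holds only on the support of $\hat\varphi(h\,\cdot)$, not globally, since $a_{h,l}$ vanishes whenever $h\xi_l\in\Z$; and your Taylor inequality $|a_{h,l}-A_l|\leq Ch\xi_l^2$ requires the sign-consistent convention $a_{h,l}=(e^{2\pi ih\xi_l}-1)/h$, a convention slip you inherit from the paper's own formulas.)
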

\begin{proof}
First write 
$$ 
\|(H_h-z)^{-1}P_h-P_h(H-z)^{-1}\|=\|Q_h^* F(H_h-z)^{-1}F^* Q_h-Q_h^*Q_h\F(H-z)^{-1}\F^*\|.  
$$
As is the previous  lemma it is enough to prove that the inequality holds for  each entry in the matrix of $Q_h^* F(H_h-\lambda)^{-1}F^* Q_h- Q_h^*Q_h\F(H-\lambda)^{-1}\F^*$. Then, taking into account \cref{eq:resolventefibrado1,eq:resolventefibrado2}, for  a fixed $l $, we  consider    $\|Q_h^* \dfrac{a_{h,l}}{r_z} Q_h-
Q_h^* Q_h\dfrac{A_l}{R_z} \|$. Let  $\psi$ in  $\S(\R^n)$, then  it is possible to show that (see the Appendix in \cite{NT21})
\begin{equation}\label{NT1}
\left(Q_h^* \dfrac{a_l}{r_z} Q_h -Q_h^*Q_h\dfrac{A_l}{R_z} \right)\psi=\sum_{\mu\in\{0,\pm1\}^n} \hat{\varphi}(h\xi)\overline{\hat{\varphi}(h\xi+\mu)}\mathcal B_h(\xi+h^{-1}\mu)\psi(\xi+h^{-1}\mu),
\end{equation}
where $\mathcal B_h=\dfrac{a_{h,l}}{r_z} -\dfrac{A_l}{R_z}=a_{h,l}\dfrac{R_z-r_z}{r_zR_z}+\dfrac{a_{h,l}-A_l}{R_z}$.
Using the Taylor expansion and \cref{condi_phi,condi_phi2} we easily get  
\begin{equation*}%\label{lemma1}
|\hat\varphi(h\xi)|^2 \left|\frac{a_{h,l}-A_l}{R_z}\right|\leq C h |\hat\varphi(h\xi)|^2,
\end{equation*}
and 
\begin{equation*}%\labe{lemma1a}
|\hat\varphi(h\xi)|^2|a_{h,l}|\left|\dfrac{R_z-r_z}{r_zR_z}\right|\leq C h |\hat\varphi(h\xi)|^2. 
\end{equation*}
 In the same way, from \cref{condi_phi,condi_phi2},  for  $\mu\in\{0,\pm1\}^n$ 
 \begin{equation}\label{NT2}
|\hat{\varphi}(h\xi)\overline{\hat{\varphi}(h\xi+\mu)}D_h(\xi+h^{-1}\mu)|\leq C h, \quad \xi\in\R^n.
\end{equation}
 Thus, using \cref{NT2} in  \cref{NT1} we finish the proof.

\end{proof}

\subsection{Proof of \texorpdfstring{\cref{teo:main}}{Theorem 1.1}}
Let $T_h:\ell^2(X_h)\to\bigoplus_{j=0}^nL^2(\R^n;\C^{\binom{n}j})$ be the operator
\begin{equation*}
T_h:=P_h^*\tilde{U}_hU\ .
\end{equation*}
 Being $\tilde{U}_hU$ unitary, and $P_h$ a partial isometry with full range, $T_h$ is an isometry.

Noticing that $T_h(d_h+d_h^*-z)T_h^*=P_h^*(H_h-z)P_h$ we write
\begin{equation*}
P_h^*(H_h-z)^{-1}P_h-(H-z)^{-1}
=P_h^*\left[(H_h-\lambda)^{-1}P_h-P_h(H-\lambda)^{-1}\right]-(1-P_h^*P_h) (H-\lambda)^{-1}
\end{equation*}
from where one can conclude by taking into account \cref{le0,le1}.\qedhere

\appendix
\section{Proof of \cref{todofunciona}}\label{apendice}

Because the definition of $\partial$ involves $(\prescript{}{i}{\hat{s}})^*$ we need a better understanding of how the involution and restriction are related before attempting to prove \cref{todofunciona}.{}

\begin{Lemma}\label{stecnico}
Let $\hat{s}$ be given by \cref{shat}. Let $1\leq i_0 \leq j$ and $1\leq i_1 \leq j-1$. Then the following statements hold:
\begin{enumerate}
\item $\prescript{}{i_0}{(\hat{s}^*})=(\prescript{}{j-i_0+1}{\hat{s}})^*$ \label{stecnico1}
\item $(\prescript{}{i_0}{(\hat{s}^*))^*}=(\prescript{}{j-i_0+1}{\hat{s}})$ \label{stecnico2}
\item \begin{equation*}
\prescript{}{i_1}{(\prescript{}{i_0}{\hat{s}})}=\begin{cases}
\prescript{}{(i_0-1)}{(\prescript{}{i_1}{\hat{s}})} &\text{ if }i_1<i_0\ ;\ \\
\prescript{}{i_0}{(\prescript{}{(i_1+1)}{\hat{s}})} &\text{ if }i_0\leq i_1\ .
\end{cases}
\end{equation*}\label{stecnico3}
\end{enumerate}
\end{Lemma}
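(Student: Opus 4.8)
The plan is to verify all three identities by unwinding the defining formulas for the involution $\hat{s}^*(i)=\hat{s}(j-i+1)$ and the restriction $\prescript{}{i_0}{\hat{s}}$, evaluating both sides at an arbitrary argument $i$ and matching them through a short case analysis. The conceptual picture that organizes the bookkeeping is that restricting by $\prescript{}{i_0}{}$ deletes the value sitting in slot $i_0$ and reindexes the surviving slots in an order-preserving way, whereas the involution $*$ reverses the order of slots. The one point that must be tracked with care is that the reversal index produced by $*$ depends on the size of the domain: on $P^{j,n}$ it is $i\mapsto j-i+1$, but on $P^{j-1,n}$ it is $i\mapsto j-i$.

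For \cref{stecnico1}, I would evaluate the left-hand side at $i$, obtaining $\prescript{}{i_0}{(\hat{s}^*)}(i)=\hat{s}(j-i+1)$ when $i<i_0$ and $\hat{s}(j-i)$ when $i\geq i_0$. For the right-hand side I would use that the involution on $P^{j-1,n}$ acts by $i\mapsto j-i$, so that $(\prescript{}{j-i_0+1}{\hat{s}})^*(i)=\prescript{}{j-i_0+1}{\hat{s}}(j-i)$; substituting $k=j-i$ into the definition of $\prescript{}{j-i_0+1}{\hat{s}}$ and checking that $k<j-i_0+1\Leftrightarrow i\geq i_0$ shows the two expressions agree on both branches. \cref{stecnico2} then follows with no extra work: applying $*$ to both sides of \cref{stecnico1} and using that $*$ is an involution gives $(\prescript{}{i_0}{(\hat{s}^*)})^*=\prescript{}{j-i_0+1}{\hat{s}}$.

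For \cref{stecnico3}, rather than a quadruple case analysis I would argue by tracking which two original slots of $\hat{s}$ are ultimately deleted. The slots surviving $\prescript{}{i_0}{}$ are $\{1,\dots,j\}\setminus\{i_0\}$, reindexed in order; deleting slot $i_1$ of this list removes the original slot $i_1$ if $i_1<i_0$ and the original slot $i_1+1$ if $i_1\geq i_0$. On the right-hand side, when $i_1<i_0$ one first deletes slot $i_1$, which shifts the original slot $i_0$ down to position $i_0-1$, and the subsequent $\prescript{}{(i_0-1)}{}$ then removes exactly original slot $i_0$; when $i_0\leq i_1$ one first deletes slot $i_1+1>i_0$, leaving the original slot $i_0$ in place to be removed by $\prescript{}{i_0}{}$. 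In both regimes the two compositions delete the same unordered pair of original slots and reindex order-preservingly, so they define the same element of $P^{j-2,n}$. The main obstacle throughout is purely the $\pm1$ index shifts introduced by composing deletions and by the domain-dependence of $*$; this identity is the combinatorial analogue of the simplicial face relation $\partial_{i_1}\partial_{i_0}=\partial_{i_0-1}\partial_{i_1}$ for $i_1<i_0$, and once the slot-deletion picture is set up the verification is routine.
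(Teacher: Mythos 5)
Your proof is correct, and for parts (1) and (2) it is essentially the paper's own argument: the paper likewise evaluates $\prescript{}{i_0}{(\hat{s}^*)}$ and $(\prescript{}{j-i_0+1}{\hat{s}})^*$ at an arbitrary $i$, uses that the involution on $P^{j-1,n}$ reverses indices via $i\mapsto j-i$, and matches the branches through the equivalence $j-i<j-i_0+1\Leftrightarrow i\geq i_0$; it also obtains (2) from (1) exactly as you do, by applying the involution once more. Where you genuinely diverge is part (3). The paper proves it by a second pointwise computation, splitting $\prescript{}{i_1}{(\prescript{}{i_0}{\hat{s}})}(i)$ into the three ranges $i<i_1$, $i_1\leq i\leq i_0-2$ and $i_0-1\leq i$ in the case $i_1<i_0$, and omitting the case $i_0\leq i_1$ as analogous. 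You instead note that $\prescript{}{i_0}{\hat{s}}=\hat{s}\circ\sigma_{i_0}$, where $\sigma_{i_0}$ is the unique increasing bijection from $\{1,\dots,j-1\}$ onto $\{1,\dots,j\}\setminus\{i_0\}$, so a composition of two restrictions is $\hat{s}$ precomposed with the unique increasing injection whose image misses a two-element set; it then suffices to check that both sides of the identity delete the same unordered pair of original slots, namely $\{i_1,i_0\}$ when $i_1<i_0$ and $\{i_0,i_1+1\}$ when $i_0\leq i_1$, which your slot-tracking does correctly. This is the standard simplicial face-identity argument, and it buys a uniform, index-free treatment that in particular covers the case the paper leaves to the reader; the small price is that a careful write-up should state explicitly the (routine) uniqueness principle hiding behind ``reindex order-preservingly'', namely that an element of $P^{j-2,n}$ of the form $\hat{s}\circ\iota$ with $\iota$ increasing is determined by the image of $\iota$.
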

\begin{proof}
Let $1\leq i\leq j-1$, then on one hand we have 
\begin{align*}
\prescript{}{i_0}{(\hat{s}^*})(i)=&\begin{cases}
\hat{s}^*(i)\quad &i<i_0\\
\hat{s}^*(i+1)\quad & i_0\leq i
\end{cases}\\
=&\begin{cases}
\hat{s}(j-i+1)\quad &i<i_0\\
\hat{s}(j-i)\quad & i_0\leq i
\end{cases}
\end{align*}
while in the other we can compute
\begin{align*}
(\prescript{}{j-i_0+1}{\hat{s}})^*(i)=&\prescript{}{j-i_0+1}{\hat{s}}(j-i)\\
=&\begin{cases}
\hat{s}(j-i)\quad &j-i<j-i_0+1\\
\hat{s}(j-i+1)\quad & j-i_0+1\leq j- i
\end{cases}\\
=&\begin{cases}
\hat{s}(j-i)\quad & i_0\leq i\\
\hat{s}(j-i+1)\quad &i<i_0
\end{cases}
\end{align*}
which proves \cref{stecnico1}. Further, \cref{stecnico2} is a direct consequence of \cref{stecnico1}. To prove \cref{stecnico3} we fix $1\leq i_0\leq j$ and $1\leq i_1\leq j-1$ and first consider the case $i_1<i_0$. We first compute
\begin{align*}
\prescript{}{i_1}{(\prescript{}{i_0}{\hat{s}})}(i)=&\begin{cases}
\prescript{}{i_0}{\hat{s}}(i)\quad &i<i_1\\
\prescript{}{i_0}{\hat{s}}(i+1)\quad & i_1\leq i
\end{cases}\\
=&\begin{cases}
\hat{s}(i)\quad &i<i_1\\
\hat{s}(i+1)\quad &i_1\leq i\leq i_0-2\\
\hat{s}(i+2)\quad &i_0-1\leq i
\end{cases}
\end{align*}
and then compare it with
\begin{align*}
\prescript{}{i_0-1}{(\prescript{}{i_1}{\hat{s}})}(i)
=&\begin{cases}
\prescript{}{i_1}{\hat{s}}(i)\quad &i<i_0-1\\
\prescript{}{i_1}{\hat{s}}(i+1)\quad & i_0-1\leq i
\end{cases}
\end{align*}
which will coincide with the previous expression and hence show that \cref{stecnico3} holds for $i_1<i_0$. The computations for $i_0\leq i_1$ follow the same structure and are omitted.
\end{proof}

\begin{proof}[Proof of \cref{todofunciona}]
Let us first show that $\partial(\overline{s})=\overline{\partial (s)}$. By \cref{sborde} and using \cref{stecnico1,stecnico2} from \cref{stecnico} we can check that:
\begin{align*}
\partial (\overline{s}) =&\cup_{i=1}^j \{(-1)^{j-i} (\lfloor \overline{s} \rfloor;\prescript{}{i}{(\hat{s}^*)}) \}\bigcup\cup_{i=1}^j\{(-1)^{i} (\lceil \overline{s} \rceil;(\prescript{}{i}{(\hat{s}^*)})^*)  \}\\
=&\cup_{i=1}^j \{(-1)^{j-i} (\lceil s \rceil\lfloor ;(\prescript{}{j-i+1}{\hat{s}})^*) \}\bigcup\cup_{i=1}^j\{(-1)^{i} (\lfloor s \rfloor;\prescript{}{j-i+1}{\hat{s}})  \}\\
=&\cup_{m=1}^j \{(-1)^{m-1} (\lceil s \rceil\lfloor ;(\prescript{}{m}{\hat{s}})^*) \}\bigcup\cup_{m=1}^j\{(-1)^{j-m+1} (\lfloor s \rfloor;\prescript{}{m}{\hat{s}})  \}\\
=&\cup_{m=1}^j \overline{\{(-1)^{m} (\lceil s \rceil\lfloor ;(\prescript{}{m}{\hat{s}})^*) \}}\bigcup\cup_{m=1}^j\overline{\{(-1)^{j-m} (\lfloor s \rfloor;\prescript{}{m}{\hat{s}})  \}}=\overline{\partial(s)}\ .
\end{align*}
It remains to show that $\partial(\partial(s))=\overline{\partial(\partial(s))}$. For this we introduce the following notation for $1\leq i \leq j$ and $s\in X^j$
\begin{equation*}
A_i(s):=(-1)^{j-i}(\lfloor s \rfloor ; \prescript{}{i}{\hat{s}})\quad B_i(s):=(-1)^{i}(\lceil s \rceil:(\prescript{}{i}{\hat{s}})^*)\ . 
\end{equation*}
It follows that
\begin{equation*}
\partial(\partial (s))=\cup_{l=1}^{j-1}\cup_{i=1}^{j}\{A_l(A_i(s))\cup B_l(A_i(s))\cup A_l(B_i(s))\cup B_l(B_i(s))\}\ . 
\end{equation*}
Let us now consider $s\in X^j$. Without lose of generality we will assume that $s=(\lfloor s \rfloor ; \delta_1,\dots,\delta_j)$. From the proof of \cref{stecnico} on can see that if $i_1<i_0$, then $\prescript{}{i_1}{(\prescript{}{i_0}{\hat{s}})}$ skips the $i_0$ and $i_1$ terms. We start by identifying the faces $r \in \partial(\partial (s))$ that satisfies $\lfloor r \rfloor = \lfloor s \rfloor$. There are given for $1\leq i\leq j$ and $1\leq l\leq j-1$ such that $j-i$ is even and $j-l$ is odd by
 \begin{equation*}
 A_l(A_i)(s)=A_l((-1)^{j-i}(\lfloor s \rfloor ;  \prescript{}{i}{\hat{s}})=(-1)^{j-1-l}(\lfloor s \rfloor ;  \prescript{}{l}{(\prescript{}{i}{\hat{s}})})=
 \begin{cases}(\lfloor s \rfloor ;  \prescript{}{l}{(\prescript{}{i}{\hat{s}})}) &\text{ if }l<i\\
 (\lfloor s \rfloor ;  \prescript{}{i}{(\prescript{}{l+1}{\hat{s}})}) &\text{ if }i\leq l
 \end{cases}\ ;
 \end{equation*}
and for $1\leq i\leq j$ and $1\leq l\leq j-1$ such that $j-i$ is odd and $l$ is even by
 \begin{align*}
 B_l(A_i)(s)=B_l((-1)^{j-i}(\lfloor s \rfloor ;  \prescript{}{i}{\hat{s}})=&B_l((\lceil s \rceil-\delta_i ;  (\prescript{}{i}{\hat{s}})^*)\\
 =&(-1)^{l}(\lfloor s \rfloor ;  (\prescript{}{l}{((\prescript{}{i}{\hat{s}})^*}))^*)\\
 =&(\lfloor s \rfloor ; \prescript{}{j-l}{(\prescript{}{i}{\hat{s}})})=
 \begin{cases}(\lfloor s \rfloor ;  \prescript{}{j-l}{(\prescript{}{i}{\hat{s}})}) &\text{ if }j-l<i\\
 (\lfloor s \rfloor ;  \prescript{}{i}{(\prescript{}{j-l+1}{\hat{s}})}) &\text{ if }i\leq j-l
 \end{cases}\ .
\end{align*}
In both cases we have faces that start at $\lfloor s \rfloor$ but have two directions less than $s$. To check that we have all possible $\tfrac{j*(j-1)}2$ combinations we first consider the case $j=2k$ for $k\in \mathbb{N}$. We have that the faces $r\in\partial(\partial (s))$  that satisfies $\lfloor r \rfloor=\lfloor s \rfloor$ are given by
\begin{align*}
&\left\{ \cup_{p=1}^k \cup_{m=1}^k A_{2p-1}(A_{2m}(s))\right\}\bigcup\left\{  \cup_{p=1}^{k-1}\cup_{m=1}^k B_{2p}(A_{2m-1}(s))\right\}\\
&\hspace*{30pt}=\left\{ \cup_{p=1}^k \cup_{m=p}^k (\lfloor s \rfloor ;  \prescript{}{2p-1}{(\prescript{}{2m}{\hat{s}})}) \right\}\bigcup\left\{ \cup_{p=1}^k \cup_{m=1}^{p-1} (\lfloor s \rfloor ;  \prescript{}{2m}{(\prescript{}{2p}{\hat{s}})}) \right\}\\
&\hspace*{30pt}\phantom{\quad}\bigcup\left\{ \cup_{p=1}^{k-1}\cup_{m=k-p+1}^{k}(\lfloor s \rfloor ;  \prescript{}{2k-2p}{(\prescript{}{2m-1}{\hat{s}})})\right\}\bigcup\left\{ \cup_{p=1}^{k-1}\cup_{m=1}^{k-p} (\lfloor s \rfloor ;\prescript{}{2m-1}{(\prescript{}{2k-2p+1}{\hat{s}})})\right\}\\
&\hspace*{30pt}=\left\{ \cup_{p=1}^k \cup_{m=p}^k (\lfloor s \rfloor ;  \prescript{}{2p-1}{(\prescript{}{2m}{\hat{s}})}) \right\}\bigcup\left\{ \cup_{p=1}^k \cup_{m=1}^{p-1} (\lfloor s \rfloor ;  \prescript{}{2m}{(\prescript{}{2p}{\hat{s}})}) \right\}\\
&\hspace*{30pt}\phantom{\quad}\bigcup\left\{ \cup_{p=1}^{k-1}\cup_{m=p+1}^{k}(\lfloor s \rfloor ;  \prescript{}{2p}{(\prescript{}{2m-1}{\hat{s}})})\right\}\bigcup\left\{ \cup_{p=1}^{k-1}\cup_{m=1}^{p} (\lfloor s \rfloor ;\prescript{}{2m-1}{(\prescript{}{2p+1}{\hat{s}})})\right\}\\
&\hspace*{30pt}=\left\{ \cup_{l=1}^{j-1} \cup_{i=l+1}^j (\lfloor s \rfloor ;  \prescript{}{l}{(\prescript{}{i}{\hat{s}})}) \right\}.\\
\end{align*}
Similar computations work for $j=2k+1$ as well for the treatment of the other types of faces. We do not enter into the detail of each one but we will discuss where the terms come from with the aid of the following table.

\begin{center}
\begin{tabular}{||l| c c c c||} 
 \hline
  & $r$ & $i$ such that & $l$ such that & $\lfloor r \rfloor$ \\ 
 \hline
 \hline
 1 & $A_l(A_i(s))$ & $j-i$ is even & $j-l$ is odd & $\lfloor s \rfloor$ \\ 
 \hline
 2 & $B_l(A_i(s))$ & $j-i$ is odd & $l$ is even & $\lfloor s \rfloor$ \\
 \hline

%%%%%%%%%%%%%%%%%%%%%%%%%%%%%%%%%%%%%%%%%%%%%%%%%%%%%%%%%%%%%%%%%%%%%%%%%%%%%%%%%%%%%%%%%%%%%%%%%%%%%%%%%%%%%%%%%%%%%%%%%%%%%%%%%%%%%%%%%%%%%%%%

 3 & $A_l(A_i(s))$ & $j-i$ is even & $j-l$ is even & $\lceil s \rceil - \delta_{x_1} - \delta_{x_2}$ \\
 \hline
 4 & $B_l(A_i(s))$ & $j-i$ is odd & $l$ is odd & $\lceil s \rceil - \delta_{x_1} - \delta_{x_2}$ \\
 \hline

%%%%%%%%%%%%%%%%%%%%%%%%%%%%%%%%%%%%%%%%%%%%%%%%%%%%%%%%%%%%%%%%%%%%%%%%%%%%%%%%%%%%%%%%%%%%%%%%%%%%%%%%%%%%%%%%%%%%%%%%%%%%%%%%%%%%%%%%%%%%%%%%
%%%%%%%%%%%%%%%%%%%%%%%%%%%%%%%%%%%%%%%%%%%%%%%%%%%%%%%%%%%%%%%%%%%%%%%%%%%%%%%%%%%%%%%%%%%%%%%%%%%%%%%%%%%%%%%%%%%%%%%%%%%%%%%%%%%%%%%%%%%%%%%%

 5 & $A_l(A_i(s))$ & $j-i$ is odd & $j-l$ is even & $\lfloor s \rfloor + \delta_{x_1}$ \\
 \hline
 6 & $B_l(A_i(s))$ & $j-i$ is even & $l$ is odd & $\lfloor s \rfloor + \delta_{x_1}$ \\
 \hline
 7 & $A_l(B_i(s))$ & $i$ is odd & $j-l$ is odd & $\lfloor s \rfloor + \delta_{x_1}$ \\
 \hline
 8 & $B_l(B_i(s))$ & $i$ is even & $l$ is even & $\lfloor s \rfloor + \delta_{x_1}$ \\
 \hline

%%%%%%%%%%%%%%%%%%%%%%%%%%%%%%%%%%%%%%%%%%%%%%%%%%%%%%%%%%%%%%%%%%%%%%%%%%%%%%%%%%%%%%%%%%%%%%%%%%%%%%%%%%%%%%%%%%%%%%%%%%%%%%%%%%%%%%%%%%%%%%%%

 9 & $A_l(A_i(s))$ & $j-i$ is odd & $j-l$ is odd & $\lceil s \rceil  - \delta_{x_1}$ \\
 \hline
 10 & $B_l(A_i(s))$ & $j-i$ is even & $l$ is even & $\lceil s \rceil  - \delta_{x_1}$ \\
 \hline
 11 & $A_l(B_i(s))$ & $i$ is odd & $j-l$ is even & $\lceil s \rceil  - \delta_{x_1}$ \\
 \hline
 12 & $B_l(B_i(s))$ & $i$ is even & $l$ is odd & $\lceil s \rceil  - \delta_{x_1}$ \\
 \hline

%%%%%%%%%%%%%%%%%%%%%%%%%%%%%%%%%%%%%%%%%%%%%%%%%%%%%%%%%%%%%%%%%%%%%%%%%%%%%%%%%%%%%%%%%%%%%%%%%%%%%%%%%%%%%%%%%%%%%%%%%%%%%%%%%%%%%%%%%%%%%%%%
%%%%%%%%%%%%%%%%%%%%%%%%%%%%%%%%%%%%%%%%%%%%%%%%%%%%%%%%%%%%%%%%%%%%%%%%%%%%%%%%%%%%%%%%%%%%%%%%%%%%%%%%%%%%%%%%%%%%%%%%%%%%%%%%%%%%%%%%%%%%%%%%

 13 & $A_l(B_i(s))$ & $i$ is even & $j-l$ is even & $\lfloor s \rfloor + \delta_{x_1} + \delta_{x_2}$ \\
 \hline
 14 & $B_l(B_i(s))$ & $i$ is odd & $l$ is odd & $\lfloor s \rfloor + \delta_{x_1} + \delta_{x_2}$ \\
 \hline

%%%%%%%%%%%%%%%%%%%%%%%%%%%%%%%%%%%%%%%%%%%%%%%%%%%%%%%%%%%%%%%%%%%%%%%%%%%%%%%%%%%%%%%%%%%%%%%%%%%%%%%%%%%%%%%%%%%%%%%%%%%%%%%%%%%%%%%%%%%%%%%%

 15 & $A_l(B_i(s))$ & $i$ is even & $j-l$ is odd & $\lceil s \rceil$ \\
 \hline
 16 & $B_l(B_i(s))$ & $i$ is odd & $l$ is even & $\lceil s \rceil$ \\
 \hline
\end{tabular}
\end{center}
The first two rows corresponds to the calculations we have already done. Rows 3 and 4 will give rise two the opposite faces of the first two rows. In all, the first two rows are all the faces that contain $\lfloor s \rfloor$, either as $\lfloor r \rfloor$ or $\lceil r \rceil$. The next 8 rows describe the faces that contain a vertices at distance $1$ from $\lfloor s \rfloor$. Again we have that the first rows, rows $5$ to $8$, are positively oriented while the second half, rows $9$ to $12$ correspond to theirs opposite faces. Finally, rows $13$ to $16$ are symmetric to the the first 4 rows but its faces contain $\lceil s \rceil$ instead of $\lfloor s \rfloor$. 
\end{proof}

\bigskip
\noindent{\bf Data availability:} There are no data associated with this manuscript.

\noindent{\bf Conflict of interest:} On behalf of all authors, the corresponding author states there is no conflict of interest. 

\noindent{\bf Acknowledgments:} P. Miranda was supported by the Chilean Fondecyt Grant $1201857$. D. Parra was supported by the Chilean Fondecyt Grant $3210686$. 

\printbibliography
%----------------------------------------------------------------------------------------------------------------------------------------------------------------------------------------------

\end{document}